\documentclass[journal]{IEEEtran}

\usepackage{xcolor}
\usepackage{soul}
\sethlcolor{yellow}

\usepackage{amsmath,amssymb,amsfonts}
\usepackage{mathtools}
\usepackage{amsthm}
\usepackage{bm}

\usepackage{algorithm}
\usepackage{algorithmic}

\usepackage{graphicx}
\usepackage[caption=false,font=normalsize,labelfont=sf,textfont=sf]{subfig}
\usepackage{stfloats}
\usepackage{array}
\usepackage{tabularx}
\usepackage{multirow}
\usepackage{makecell}
\usepackage[table]{xcolor}

\usepackage{enumitem}
\usepackage{textcomp}
\usepackage{url}
\usepackage{verbatim}
\usepackage{cite}
\usepackage{hyperref}
\usepackage{xurl}

\newcounter{appx}
\renewcommand{\theappx}{\Alph{appx}}

\definecolor{HeaderBlue}{HTML}{0E5A73}
\definecolor{SubHeaderGray}{HTML}{C9CED4}
\definecolor{BodyGray}{HTML}{E9ECEF}

\newcolumntype{Y}{>{\centering\arraybackslash}X}

\newtheorem{thm}{Theorem}
\newtheorem{defi}{Definition}
\newtheorem{lemma}{Lemma}
\newtheorem{prop}{Proposition}
\newtheorem{rem}{Remark}

\newcommand{\Zblk}[2]{%
\begin{bmatrix}
Z_{dd}^{#1#2} & Z_{dq}^{#1#2}\\
Z_{qd}^{#1#2} & Z_{qq}^{#1#2}
\end{bmatrix}}

\title{Decentralized Stability of IBR-dominated Power Grids Using Block Diagonal Dominance}

\author{Youhong~Chen,~\IEEEmembership{Member,~IEEE},
Xiaoyu~Tan,~\IEEEmembership{Student~Member,~IEEE},
Muhammad~Sharjeel~Javaid,~\IEEEmembership{Member,~IEEE},
David~Angeli,~\IEEEmembership{Fellow,~IEEE},
and~Balarko~Chaudhuri,~\IEEEmembership{Fellow,~IEEE}
\thanks{This work was supported by the EPSRC under Grant EP/Y025946/1. Corresponding author: Xiaoyu Tan (e-mail: \href{mailto:xiaoyu.tan22@imperial.ac.uk}{xiaoyu.tan22@imperial.ac.uk}).}
\thanks{The authors are with University of Bath and Imperial College London, U.K.}}

\IEEEpubid{}
\begin{document}
\maketitle
\bstctlcite{TPWRScontrol}

% \begin{abstract}
% This paper presents a decentralized stability criterion for inverter-based resource (IBR)–dominated power systems based on a diagonal dominance condition applied to open-loop transfer functions. In contrast to conventional closed-loop stability assessments that rely on eigenvalue analysis, the proposed framework evaluates system stability by integrating the dynamics of individual power devices and the network in an open-loop and fully decentralized manner. Compared with existing decentralized stability conditions, including small-gain, small-phase, and passivity-based criteria, the proposed method exhibits reduced conservatism. Moreover, the proposed approach is readily generalizable and well suited for practical connection compliance assessment in large-scale multi-IBR power systems. The effectiveness of the method is demonstrated using a modified IEEE 39-bus test system.
% \end{abstract}

\begin{abstract}
    The growing penetration of inverter-based resources (IBRs) necessitates stability assessment methods that are scalable, decentralized, and model-agnostic. This paper develops a block diagonal dominance (BDD) criterion for decentralized small-signal stability of IBR-dominated power grids. The proposed approach forms the basis for an enhanced IBR connection compliance condition from a small-signal stability perspective that can be evaluated locally for IBRs to be connected to the grid. The proposed approach is shown to be much less conservative than strict diagonal dominance (SDD). Beyond mere stability, we ensure a minimum decay rate or maximum settling time for IBR-induced oscillation. Crucially, these are achieved without imposing restrictive assumptions on network or IBR models. The framework therefore, offers a practical and theoretically grounded basis for decentralized stability certificate of IBR-dominated power grids.
\end{abstract}

\begin{IEEEkeywords}
Decentralized stability conditions, diagonal dominance, inverter-based resource, small-signal stability.
\end{IEEEkeywords}

\section{Introduction}
\IEEEPARstart{D}{ecarbonization} of the electricity sector worldwide is driving unprecedently high penetration of inverter-based resources (IBRs) including wind and solar photovoltaic generation, grid-scale battery storage and HVDC links in power grids. While these IBRs are essential for clean energy transition, their widespread deployment fundamentally alters grid dynamics and introduces new stability challenges \cite{Hatziargyriou2021Definition}. In particular, IBR-dominated grids are increasingly susceptible to poorly damped sub-synchronous oscillations (SSOs) which are being experienced in different parts of the world \cite{Cheng2023Events,PEMag24_Modi_Real_world_SSO_events}.

The major difficulty in assessing the stability of IBR-dominated systems lies not only in their scale and complexity, but also in the lack of transparency of vendor-specific IBR models \cite{esig_oscillations_2024}. Unlike synchronous machines, whose dynamics are well understood and standardized, IBR control implementations are often proprietary which makes systematic, system-wide stability analysis using traditional model-based approaches difficult. As a result, grid operators face growing uncertainty when evaluating the stability impact of new IBR connections even though it is critical for them to foresee the risk of IBR-induced SSO during the connection process itself.

At present, IBR connection compliance (interconnection studies) typically relies on detailed models of the connecting IBR plant combined with simplified representations of the grid, often reduced to a Thevenin equivalent \cite{NESO2025Guidance}. Stability and performance (e.g., a minimum decay rate) is then assessed using time-domain simulations (step response) or frequency-domain techniques such as eigenvalue or Nyquist analysis \cite{NESO2025Guidance}. A Thevenin equivalent at fundamental frequency is not expected to capture interaction between the connecting IBR and the grid across the whole sub-synchronous frequency range and with existing IBRs in the grid. A more robust alternative is to use the estimated frequency response (or corresponding vector-fitted transfer function) of the grid (including existing IBRs) to capture potential IBR-grid interaction across the frequency range of interest \cite{aemo2025frequency}. While suitable for a particular upcoming IBR, this approach fails to capture dynamic interactions with other upcoming IBRs. Consequently, stability guarantees obtained at the time of a particular IBR connection may no longer hold as additional IBRs connect elsewhere in the grid. Ensuring stability and a specified performance (say, minimum decay rate of SSO) as the grid evolves with new IBR connections necessitates a decentralized stability framework, where each IBR can be assessed for compliance locally, yet overall system stability and performance is preserved.
% regardless of the number of IBR connections over a given planning horizon.

In this paper, we present such a framework using diagonal dominance criterion\cite{Horn1985,rosenbrock1974}. The basic theory is presented in Sections II and III but the idea is to derive decentralized small-signal stability conditions that can be checked independently and locally at each IBR connection point using the estimated transfer functions of the connecting IBR and the grid with existing IBRs. When all connecting IBRs satisfy their prescribed conditions, overall system stability and performance metrics such as decay rate or settling time is ensured. In power system literature, diagonal dominance has been used in conjunction with Nyquist array theory to perform system-level stability assessment and controller design \cite{DD_PS_Paper} for a single IBR only. This work, however, proposes the use of diagonal dominance as a decentralized, locally verifiable certificate at individual IBR connection points for a multi-IBR system, rather than as a tool to perform model-based Nyquist array analysis and control design for a single IBR.

To illustrate the application, let’s consider a set of $N$ candidate locations (or buses) where new IBRs are expected to connect over a given planning horizon. The estimated transfer function of the grid, including existing IBRs, can be obtained at these locations via frequency scanning \cite{GIST2023Shah} and represented as a $2N\times2N$ transfer function in $dq$-domain using vector fitting \cite{vectorfitting}. As each IBR is a $2\times 2$ transfer function block, the block diagonal dominance (BDD) criterion developed in this paper enables each upcoming IBR to be assessed locally against the relevant block row of the grid model. Crucially, although the assessment is local, the resulting stability and performance guarantee is global, provided all IBRs satisfy their prescribed conditions derived from BDD criterion. This is radically different from the current stability check at each IBR connection point without consideration of prospective other upcoming IBR connections.

A well-known challenge with decentralized stability approach is conservativeness. %\cite{decentralized_conservativeness}.
As the number of candidate IBR connection points increases especially, across electrically distant parts of the grid, decentralized stability criteria like strict diagonal dominance (SDD) may impose increasingly restrictive compliance requirements on connecting IBRs. Reducing this conservativeness is therefore crucial for practical adoption. Otherwise, connecting IBR plants would face overly strict compliance condition which might not be easy to meet making it difficult for grid operators to clear the IBR connection queue. This represents a trade-off between ensuring stability certificate and avoiding unnecessary restrictions on IBR integration. The proposed BDD-based decentralized framework provides grid operators with a practical means to balance this trade-off.

% This is a trade-off between stability certificate while letting the IBRs connect without unnecessary restriction, which the grid operators will have to properly calibrate. The decentralized stability framework based on BDD in this paper will enable grid operators to do so. 

% Decentralized methods, such as small-gain and small-phase (SGP) condition \cite{10477540}, have been proposed for IBR-dominated grids, but often rely on restrictive assumptions or exhibit significant conservativeness in realistic networks. This framework was further extended in \cite{huang2025DW} through the development of a Davis–Wielandt shell–based complex matrix geometric analysis method, which combines gain and phase conditions in a three-dimensional space to reduce conservativeness. Other decentralized stability conditions have also been proposed, including dynamic loop-shifting techniques in \cite{haberle2025} and dissipativity-based criteria in \cite{8815214,gorbunov2026}, which reshape system dynamics to enforce passivity properties and to achieve small-signal stability in a plug-and-play manner for IBR integration. However, most of these methods rely on restrictive assumptions regarding network and IBR modeling, which limits their applicability in practical system settings. 
% In fact, one could envisage a unified decentralized stability condition where it is sufficient to satisfy any one condition at each frequency to reduce overall conservativeness.
Decentralized methods, such as small-gain and small-phase (SGP) condition \cite{10477540}, have been proposed for IBR-dominated grids, and this framework was further extended in \cite{huang2025DW} through the development of a Davis–Wielandt shell–based complex matrix geometric analysis method, which combines gain and phase conditions in a three-dimensional space to reduce conservativeness. Other decentralized stability conditions have also been proposed, including dynamic loop-shifting techniques in \cite{haberle2025} and dissipativity-based criteria in \cite{8815214,gorbunov2026}, which reshape system dynamics to enforce passivity properties and to achieve small-signal stability in a plug-and-play manner for IBR integration. However, much of the literature relies on simplifying assumptions for analytical tractability, both in network representations such as identical $X/R$ ratios and zero power flow network models, and in IBR modeling such as simplified dynamics that omit inner loop dynamics and assume known control settings, thereby limiting applicability in practical system settings. 

In this paper, we demonstrate that a BDD-based condition is much less conservative than an SDD-based condition in terms of admissible IBR dynamics and electric proximity of IBR connection points. We not only ensure decentralized stability but also achieve a minimum decay rate of SSO. %Importantly, these benefits are achieved without imposing restrictive assumptions on network or IBR models.

This paper makes the following original contributions:
\begin{enumerate}[leftmargin=*]
\item Extends the strict diagonal dominance (SDD) criterion to the block diagonal dominance (BDD) for decentralized small-signal stability of IBR-dominated power grids.
\item Proves and demonstrates that BDD criterion is much less conservative than SDD criterion.
\item Establishes an enhanced IBR connection compliance condition based on BDD to reduce the risk of small-signal instability, mainly IBR-induced sub-synchronous oscillations.
\item Ensures a minimum decay rate (or maximum settling time) of oscillations in addition to mere stability certificate
\item Achieves the above without any restrictive assumptions on the network or IBRs, unlike competing approaches in the literature.
\end{enumerate}

\section{Background: MIMO Linear Time-Invariant Systems}
\label{sec:power system modelling}
This section provides background notations and results on the stability of MIMO linear time-invariant (LTI) systems in feedback.
We consider continuous-time LTI MIMO systems of the following form:
% \begin{equation}
%     \label{LTI} 
%     \begin{array}{rcl} \dot{x} &=& A x + B u \\
%     y&=& C x
%     \end{array}
\begin{equation}
\label{LTI}
\dot{x} = A x + B u,\qquad y = C x
\end{equation}
with state variable $x \in \mathbb{R}^n$, input $u \in \mathbb{R}^m$ and output $y \in \mathbb{R}^m$. As standard, $A$, $B$ and $C$ denote real matrices of suitable dimension.
We associate to a system as in (\ref{LTI}), its input-output transfer function:
\begin{equation*}
    H(s) := C (sI_n - A)^{-1} B
\end{equation*}
which is a proper rational matrix function $H: \mathbb{C} \rightarrow \mathbb{C}^{m \times m}$ in the unknown $s \in \mathbb{C}$. Asymptotic (internal) stability of system (\ref{LTI}) is characterized in terms of the spectrum of $A$, denoted $\textrm{sp} (A)$, i.e.
\begin{equation*}
\textrm{sp} (A) := \{ s \in \mathbb{C}: \textrm{det} (s I_n -A ) = 0 \}.
\end{equation*} 
In particular, asymptotic stability holds iff 
\[ \textrm{sp} (A) \subseteq \mathbb{C}_{<0} :=  \{ s: \textrm{Re} (s) <0 \}. \]
In addition, we denote by $\textrm{poles}(H)$ the set of poles of the (matrix) transfer function $H$. For a matrix function these correspond to the poles of its individual entries, i.e. $\textrm{poles}(H) = \bigcup_{i,j={1}}^{m} \textrm{poles} ( H_{ij}  )$.
We denote by $\mathcal{RH}_{\infty}$ the set of BIBO stable transfer functions, i.e. characterized by the inclusion: $\textrm{poles}(H) \subset \mathbb{C}_{<0}$.
It is well known that $\textrm{poles}(H) \subseteq \textrm{sp}(A)$,
so that \emph{asymptotic} stability implies \emph{BIBO} stability. The converse implication needs not hold, in particular if
  $\textrm{sp}(A) \backslash \textrm{poles} (H) \neq \emptyset$ its elements corresponds to eigenvalues of $A$ which are either unreachable or unobservable (or both). When these eigenvalues do not belong to $\mathbb{C}_{<0}$, then the system is BIBO stable but not asymptotically stable. Notice that such eigenvalues are unaffected by output feedback, and therefore are not the focus of our analysis. This allows, for instance, to take into account open (and closed-loop) models which may exhibit unobservable or unreachable eigenvalues in $0$, i.e. $0 \in \textrm{sp}(A) \backslash \textrm{poles} (H) $. 
  For more stringent performance qualifications, it is useful to consider 
  proper subsets of $\mathbb{C}_{<0}$. In the following, we denote by
  $\mathbb{C}_{< \alpha}$ the set
  \[ \mathbb{C}_{<\alpha} := \{ s \in \mathbb{C}: \textrm{Re}(s) < \alpha \}, \]
  where $\alpha<0$ is a suitable real parameter chosen to ensure transient modes with adequate exponential decay rate. \\
Next, we introduce the closed-loop system induced by the negative unity feedback $u = v - y$, where $v \in \mathbb{R}^m$ is an auxiliary exogenous input that enters the system through the same channel as $u$. The state-space equation reads:
\begin{equation}
    \label{LTIcl} 
 \dot{x} = (A-BC) x + B v, \quad  y= C x
 \end{equation}
and its transfer function can be expressed as:
$H_{cl}(s) =
C (s I_n - (A-BC) )^{-1} B$. Alternatively, using the open loop transfer function we have:
\begin{equation*}
 \label{iotf}
 H_{cl} (s) = H(s) ( I_m + H(s) )^{-1} = (I_m +H(s) )^{-1} H(s).
\end{equation*}
We are interested in the location of the poles of the closed-loop transfer function $H_{cl}(s)$. 
To this end denote by $\textrm{zeros} ( R )$ the zeros of a scalar rational function $R: \mathbb{C} \rightarrow \mathbb{C}$. 
The following inclusions are crucial to relate properties of the open-loop system to those of the closed-loop system (see Thm. IV.12, \cite{desoer}):
\begin{equation*}
 \label{cl_loop}
 \begin{array}{rcl}
  \textrm{zeros} ( \textrm{det} ( I_m + H ) )  &\subseteq &\textrm{poles}(H_{cl}  ) \\ \textrm{poles}(H_{cl}  )  \subseteq \textrm{zeros} ( \textrm{det} ( I_m + H ) ) &\cup& \textrm{poles} (H).
  \end{array}
\end{equation*}
A corollary of the above is the following result:
\begin{thm}
\label{necsuf}
Consider an LTI system as in (\ref{LTI}) and the corresponding closed-loop system (\ref{LTIcl}). Assume that 
$\textrm{poles} ( H ) \subseteq \mathbb{C}_{< \alpha}$. Then,
$\textrm{poles} (H_{cl}) \subseteq \mathbb{C}_{< \alpha}$ iff 
$\textrm{zeros} (\textrm{det}(I_m + H) ) \subseteq \mathbb{C}_{< \alpha}$.
\end{thm}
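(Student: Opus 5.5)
The plan is to read Theorem~\ref{necsuf} as a direct set-theoretic consequence of the two inclusions quoted just before it (Thm.~IV.12 in \cite{desoer}):
\[
\textrm{zeros}(\textrm{det}(I_m+H)) \subseteq \textrm{poles}(H_{cl}) \subseteq \textrm{zeros}(\textrm{det}(I_m+H)) \cup \textrm{poles}(H).
\]
The hypothesis $\textrm{poles}(H)\subseteq\mathbb{C}_{<\alpha}$ is what lets us discard the extra term $\textrm{poles}(H)$ in the upper bound. The two directions of the equivalence are treated separately.

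\emph{Necessity.} Assume $\textrm{poles}(H_{cl})\subseteq\mathbb{C}_{<\alpha}$. The first inclusion gives
\[
\textrm{zeros}(\textrm{det}(I_m+H))\subseteq\textrm{poles}(H_{cl})\subseteq\mathbb{C}_{<\alpha}.
\]
This direction does not use the hypothesis on $\textrm{poles}(H)$.

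\emph{Sufficiency.} Assume $\textrm{zeros}(\textrm{det}(I_m+H))\subseteq\mathbb{C}_{<\alpha}$. The second inclusion gives
\[
\textrm{poles}(H_{cl})\subseteq\textrm{zeros}(\textrm{det}(I_m+H))\cup\textrm{poles}(H).
\]
Both sets on the right lie in $\mathbb{C}_{<\alpha}$, the first by assumption and the second by the standing hypothesis, so their union does too.

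The only delicate point is that the quoted inclusions must be valid, and they require some bookkeeping. $\textrm{det}(I_m+H)$ is a scalar rational function that may have cancellations. We also need well-posedness, i.e.\ $\textrm{det}(I_m+H(s))\not\equiv 0$; in our setting this holds because $H$ is strictly proper, so $\textrm{det}(I_m+H(\infty))=1$.

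To be safe, the second inclusion can be rederived directly from $H_{cl}=(I_m+H)^{-1}H$. By the adjugate formula,
\[
(I_m+H)^{-1}=\frac{\textrm{adj}(I_m+H)}{\textrm{det}(I_m+H)},
\]
where the entries of $\textrm{adj}(I_m+H)$ are polynomials in the entries of $H$. Hence every pole of $H_{cl}$ is either a pole of $H$ or a zero of $\textrm{det}(I_m+H)$. Only this second inclusion is needed for sufficiency.

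The first inclusion is the subtler one, because a zero of the determinant could coincide with, and be masked by, a pole of $H$. I would take it from \cite{desoer} as stated and not reprove it. Among the steps, this bookkeeping is the only possible obstacle; the rest is immediate.
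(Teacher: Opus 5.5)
Your proposal is correct and follows the paper's approach: the paper presents Theorem~\ref{necsuf} as a direct corollary of the two inclusions quoted from \cite{desoer}. The first inclusion gives the ``only if'' direction, and the second, combined with $\textrm{poles}(H)\subseteq\mathbb{C}_{<\alpha}$, gives the ``if'' direction. Your extra bookkeeping is sound and is left implicit in the paper: well-posedness holds because $H$ is strictly proper, and your adjugate rederivation of the second inclusion is valid.
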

An alternative approach to direct computation of the zeros, is to exploit continuity of $\textrm{zeros} ( \textrm{det} (I_m + k H ) )$ with respect to the homotopy parameter 
$k \in [0,1]$. 
Consider $\partial \mathbb{C}_{< \alpha}$ to be the boundary
of $\mathbb{C}_{< \alpha}$ i.e.
\[ \partial \mathbb{C}_{< \alpha} = \{ s: \exists \omega \in \mathbb{R}: s = \alpha + j \omega \}. \]
Theorem \ref{necsuf} can be adapted by considering the following sufficient condition:
\begin{thm}
\label{sufonly}
Consider an LTI system as in (\ref{LTI}) and the corresponding closed-loop system (\ref{LTIcl}). Assume that 
$\textrm{poles} ( H ) \subseteq \mathbb{C}_{< \alpha}$. Then,
$\textrm{poles} (H_{cl}) \subseteq \mathbb{C}_{< \alpha}$ 
provided
\[   \textrm{det} ( I_m + k H(s) ) \neq 0 \; \quad \forall \, s
\in \partial \mathbb{C}_{<\alpha}, \; \forall \, k \in (0,1]. \]
\end{thm}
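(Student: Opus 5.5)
We argue by a homotopy (continuity of roots) combined with Theorem~\ref{necsuf}. Since $\textrm{poles}(H)\subseteq\mathbb{C}_{<\alpha}$, Theorem~\ref{necsuf} says it suffices to show $\textrm{zeros}(\textrm{det}(I_m+H))\subseteq\mathbb{C}_{<\alpha}$. We embed this in the family $f_k(s):=\textrm{det}(I_m+kH(s))$, $k\in[0,1]$. Each $f_k$ is a scalar rational function whose poles lie in $\textrm{poles}(H)\subseteq\mathbb{C}_{<\alpha}$. Hence $f_k$ is analytic on the closed half-plane $\overline{\mathbb{C}_{\ge\alpha}}:=\{s:\textrm{Re}(s)\ge\alpha\}$, and it depends polynomially, hence continuously, on $k$. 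At $k=0$ we have $f_0\equiv 1$, which has no zeros at all. The hypothesis says that for $k\in(0,1]$ no zero of $f_k$ lies on the boundary line $\partial\mathbb{C}_{<\alpha}$.

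We count the zeros of $f_k$ in $\{\textrm{Re}(s)\ge\alpha\}$ by the argument principle on a large D-shaped contour $\Gamma_R$. It consists of the segment $\{\alpha+j\omega:|\omega|\le R\}$ closed by the semicircle of radius $R$ in $\textrm{Re}(s)>\alpha$. Let $N(k)$ be the number of zeros enclosed. We aim to show that $N$ is locally constant in $k$ on $[0,1]$, and therefore $N(1)=N(0)=0$. Because there are no zeros on the line itself, this yields $\textrm{zeros}(f_1)\subseteq\mathbb{C}_{<\alpha}$. The local constancy comes from Rouché's theorem or Hurwitz's theorem: if $f_k$ has no zeros on $\Gamma_R$, then $\min_{\Gamma_R}|f_k|>0$, and uniform continuity in $k$ on the compact set $\Gamma_R\times[0,1]$ keeps $f_{k'}$ nonvanishing on $\Gamma_R$ with the same winding number for $k'$ near $k$. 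Equivalently, zeros of $f_k$ move continuously with $k$. To enter $\{\textrm{Re}(s)\ge\alpha\}$ a zero would have to cross $\partial\mathbb{C}_{<\alpha}$, which is forbidden for $k\in(0,1]$, or come in from infinity, which the next step excludes.

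The main obstacle is the behaviour at infinity, i.e.\ the large arc and the unbounded line. We need a uniform $R$ such that no $f_k$, $k\in[0,1]$, vanishes for $|s|\ge R$ with $\textrm{Re}(s)\ge\alpha$. Here we use properness of $H$: $H(s)\to D$ as $|s|\to\infty$, where $D=0$ for the strictly proper model (\ref{LTI}). Then $f_k(s)\to\textrm{det}(I_m+kD)=1$ uniformly in $k\in[0,1]$. So for $R$ large enough, $|f_k(s)-1|<1/2$ on $\{|s|\ge R\}$ for all $k$. Consequently all zeros in $\{\textrm{Re}(s)\ge\alpha\}$ are confined to a fixed compact set. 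On that compact set the Hurwitz/Rouché continuity argument applies cleanly, and a connectedness argument on $[0,1]$ then gives $N(1)=0$.

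If a general proper $H$ with feedthrough $D$ were allowed, the same argument would need $\textrm{det}(I_m+kD)\neq 0$ for all $k\in[0,1]$. This follows from the hypothesis by letting $\omega\to\infty$ on the boundary line only if the determinant stays bounded away from zero, so we would remark on it separately. For the strictly proper case (\ref{LTI}) no extra assumption is needed. Finally, Theorem~\ref{necsuf} converts $\textrm{zeros}(\textrm{det}(I_m+H))\subseteq\mathbb{C}_{<\alpha}$ into $\textrm{poles}(H_{cl})\subseteq\mathbb{C}_{<\alpha}$.
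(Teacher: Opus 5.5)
Your proposal is correct and follows the route the paper indicates. The paper gives no formal proof: it only remarks that one can exploit continuity of $\textrm{zeros}(\textrm{det}(I_m+kH))$ in the homotopy parameter $k\in[0,1]$ and then fall back on Theorem~\ref{necsuf}. Your argument-principle count on the D-contour fills in that sketch rigorously, and your uniform bound $|f_k-1|<1/2$ at infinity, which uses strict properness to rule out zeros entering from infinity, supplies a detail the paper leaves implicit.
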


\section{Diagonal Dominance Criterion} \label{sec:DD_theory}

\subsection{Strict Diagonal Dominance (SDD)} \label{subsec:SDD_theory}
The condition in Theorem \ref{sufonly} is a non-singularity test of a complex-valued square matrix over a set of possible frequencies and values of the homotopy parameter $k$.
To further simplify and ``decentralize'' its use we replace non-singularity
by the notion of Strict Diagonal Dominance (see \cite{Horn1985}). This method, pioneered by H. Rosenbrock in \cite{rosenbrock1974}, was initially adopted to simplify MIMO systems analysis and afford graphical criteria for controller synthesis akin to the SISO Nyquist plot. It is, nevertheless, still of interest for large scale networks, due to its decentralized nature.  
\begin{defi}
A matrix $M =[m_{ij}] \in \mathbb{C}^{m \times m}$ is \emph{strictly diagonally dominant}(SDD) if
\[ | m_{ii} | > \sum_{j \neq i} |m_{ij}|, \quad \forall \, i = 1 ,\ldots, m. \]
\end{defi}
It is well known that $M$ being strictly diagonally dominant implies $M$ is non-singular, (see Theorem 6.1.10, \cite{Horn1985}).
In the light of this consideration, we propose the following sufficient condition for stability of closed-loop LTI systems.

% \vspace{-0.8\baselineskip}
\begin{thm}
\label{DDsuf_}
Consider an LTI system as in (\ref{LTI}) and the corresponding closed-loop system (\ref{LTIcl}). Assume that 
$\textrm{poles} ( H ) \subseteq \mathbb{C}_{< \alpha}$. Then,
$\textrm{poles} (H_{cl}) \subseteq \mathbb{C}_{< \alpha}$ 
provided
\[   I_m + k H(s) \textrm{ is SDD }  \;  \forall \, s
\in \partial \mathbb{C}_{<\alpha}, \; \forall \, k \in (0,1]. \]
\end{thm}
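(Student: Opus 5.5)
The plan is to reduce the statement directly to Theorem \ref{sufonly}, using the standard fact (Theorem 6.1.10 of \cite{Horn1985}) that a strictly diagonally dominant matrix is non-singular. The hypotheses on the open-loop system are the same in both theorems, namely $\textrm{poles}(H) \subseteq \mathbb{C}_{<\alpha}$. So the only thing to check is that the SDD assumption implies the non-singularity condition required in Theorem \ref{sufonly}.

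Fix an arbitrary $s \in \partial \mathbb{C}_{<\alpha}$ and an arbitrary $k \in (0,1]$. By assumption, $M := I_m + kH(s)$ is SDD. First note that $s$ is not a pole of $H$: every pole lies in $\mathbb{C}_{<\alpha}$, whose closure meets $\partial\mathbb{C}_{<\alpha}$ only at points that are not in the open set itself. Hence $M$ is a well-defined complex matrix.

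To keep the argument self-contained, I would briefly recall why SDD implies non-singularity. Suppose $Mx = 0$ with $x \neq 0$. Choose an index $i$ with $|x_i| = \max_j |x_j| > 0$. Then
\[ |m_{ii}|\,|x_i| = \Big|\sum_{j\neq i} m_{ij}x_j\Big| \le \sum_{j \neq i}|m_{ij}|\,|x_i| < |m_{ii}|\,|x_i|, \]
which is a contradiction. Therefore $\det(I_m + kH(s)) \neq 0$.

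Since $s$ and $k$ were arbitrary, the determinant condition of Theorem \ref{sufonly} holds on all of $\partial\mathbb{C}_{<\alpha} \times (0,1]$. That theorem then gives $\textrm{poles}(H_{cl}) \subseteq \mathbb{C}_{<\alpha}$.

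I do not expect any real obstacle here. The substantive work, the homotopy and continuity-of-zeros argument, is already contained in Theorem \ref{sufonly}. The only points needing care are that $H(s)$ is finite on the boundary line, which follows from the pole assumption, and that the SDD property is required for every $k$ in the range, not only for $k = 1$.
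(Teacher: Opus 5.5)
Your proposal is correct and is essentially the paper's argument: the paper offers no separate proof, justifying the theorem only by citing Theorem 6.1.10 of \cite{Horn1985} (SDD implies non-singularity) so that the determinant condition of Theorem \ref{sufonly} holds for every $s \in \partial\mathbb{C}_{<\alpha}$ and $k \in (0,1]$. Your self-contained max-entry proof of that fact is the scalar analogue of the argument the paper gives for Theorem \ref{blockDD} in Appendix \ref{blockthmproof}.
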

Since $I_m+k H$ is SDD iff $\frac{1}{k} I_m + H$ is SDD, we may avoid a condition that ranges over two real parameter $\omega$ and $k$, by directly looking at the worst possible value of $1/k \in [1, + \infty)$ in the numerical verification of the SDD condition. 

\begin{figure}[!t]
\centering
\includegraphics[width=0.95\linewidth]{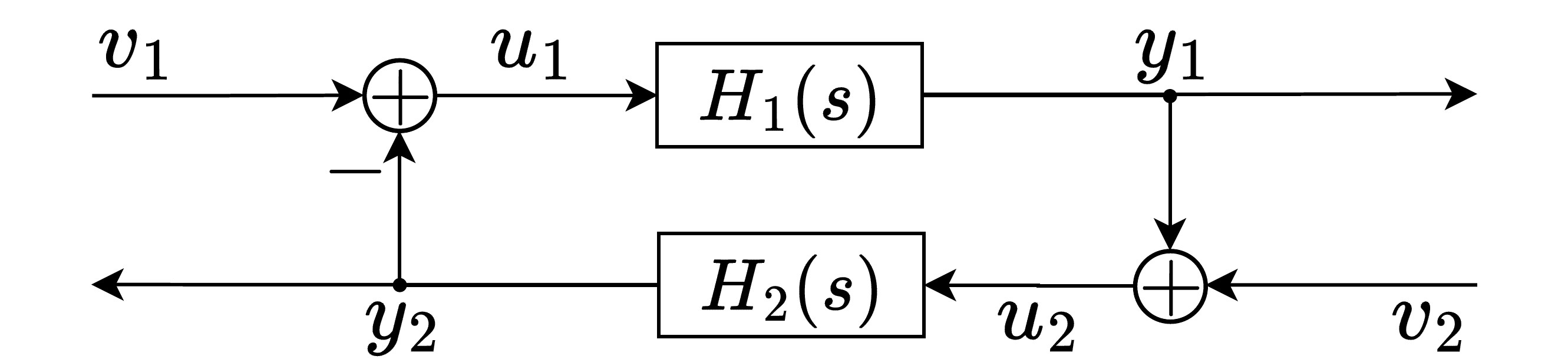}
\caption{Feedback of two subsystems.}
\label{fig:mimoloop}
\end{figure}

\begin{lemma}
The matrix $I_m + k H$ is SDD for all $k \in (0,1]$ if and only if:
\[ \min_{ \mu \in [1,+\infty) } | \mu + H_{ii} | > \sum_{j \neq i}  |H_{ij}|, \; \forall \, i = 1, \ldots , m. \]
\end{lemma}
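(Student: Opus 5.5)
The plan is to reduce the statement, row by row, to a scalar inequality and then absorb the homotopy parameter into a single real variable. Fix $s$ and write $H=H(s)$. As the paper already notes after Theorem~\ref{DDsuf_}, for $k>0$ the matrix $I_m+kH$ is SDD iff $\frac{1}{k}I_m+H$ is SDD. The reason is that multiplying every entry of a matrix by the positive scalar $1/k$ multiplies both sides of each defining inequality by $1/k$, which preserves strict inequality. Setting $\mu=1/k$, the map $k\mapsto 1/k$ is a bijection from $(0,1]$ onto $[1,+\infty)$. Hence the hypothesis ``$I_m+kH$ SDD for all $k\in(0,1]$'' is equivalent to ``$\mu I_m+H$ SDD for all $\mu\in[1,+\infty)$''.

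Next I would write out SDD of $\mu I_m+H$ row by row. The off-diagonal entries of $\mu I_m+H$ are exactly $H_{ij}$ for $j\neq i$, so they do not depend on $\mu$. The diagonal entries are $\mu+H_{ii}$. The condition therefore reads
\[ |\mu+H_{ii}|>\sum_{j\neq i}|H_{ij}| \quad \forall\, i,\ \forall\, \mu\ge 1. \]
For each fixed $i$ the right-hand side is a constant $r_i$. So the requirement is that $|\mu+H_{ii}|>r_i$ for every $\mu\ge 1$. This is equivalent to $\inf_{\mu\ge 1}|\mu+H_{ii}|>r_i$, provided the infimum is attained. The universal quantifiers over $i$ and $\mu$ commute, which gives the row-wise form stated in the lemma.

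The only delicate step is replacing ``$>$ for all $\mu$'' by ``$\min>$''. Strict inequality at every point gives only $\inf\ge r_i$ unless the infimum is attained, so attainment must be shown. The function $f(\mu)=|\mu+H_{ii}|$ is continuous on $[1,\infty)$ and coercive, since $f(\mu)\ge \mu-|H_{ii}|\to\infty$. It therefore attains its minimum on a compact interval $[1,M]$ for $M$ large, and that minimum is the global one over $[1,\infty)$. More explicitly, writing $H_{ii}=a+jb$, we have $f(\mu)^2=(\mu+a)^2+b^2$. This is minimized at $\mu^\ast=\max(1,-a)$, so the minimum equals $|b|$ when $a\le -1$ and $|1+H_{ii}|$ otherwise. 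With attainment established, both directions follow immediately. If the strict inequality holds for all $\mu$, it holds in particular at the minimizer. Conversely, if $\min_\mu f>r_i$, then every $f(\mu)$ is at least the minimum and hence exceeds $r_i$.
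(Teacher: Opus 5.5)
Your proof is correct and follows the route the paper indicates without writing out a formal proof: rescale by $1/k>0$ so that $k\in(0,1]$ becomes $\mu\in[1,+\infty)$, observe that only the diagonal entry $\mu+H_{ii}$ depends on $\mu$, and reduce row-wise SDD for all $\mu$ to a bound on $\min_{\mu\ge 1}|\mu+H_{ii}|$; your argument that this minimum is attained is a step the paper leaves implicit. Your explicit value $|\textrm{Im}(H_{ii})|$ when $\textrm{Re}(H_{ii})\le -1$ is the correct one, whereas the paper's displayed formula after the lemma drops the absolute value.
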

The quantity on the left-hand side of the previous inequality can be explicitly computed:
\[ \min_{ \mu \in [1,+\infty) } | \mu + H_{ii} |
= \left \{ \begin{array}{cl} \textrm{Im} ( H_{ii} ) & \textrm{if } \textrm{Re} ( H_{ii} ) \leq -1 \\
|1 + H_{ii} | & \textrm{if } \textrm{Re} ( H_{ii} ) > -1
\end{array} \right .  . \]
so that the corresponding inequalities `only' need to be verified for all $s \in \partial \mathbb{C}_{< \alpha}$,
rather than searching the space also with respect to the homotopy parameter $k \in (0,1]$.
When feedback of two subsystems is considered, as in Fig. \ref{fig:mimoloop},
one is interested in the BIBO stability of all transfer functions, (from each input $v_i$, $i=1,2$ towards each output $y_i, u_i$, $i=1,2$). Notice that $u_i$ is now an endogenous signal and may be regarded as an output.
For this to be the case the so-called Gang of Four transfer functions need to be considered:
\[
  \left [   \begin{array}{c} u_1 \\ u_2
  \end{array} \right ] = \left [ \begin{array}{cc}
      (I+H_1 H_2)^{-1} &   - (I+H_1H_2) H_2 \\
     H_1 (I+H_2H_1)^{-1}  & (I+ H_2 H_1)^{-1}
  \end{array}\right ]  \, \left [ \begin{array}{c} v_1 \\ v_2
  
  \end{array}
  \right ]
\]
We propose the following sufficient condition to enforce this desirable stability property:
\begin{thm}
\label{DDsuf}
Consider the feedback interconnection in Fig. \ref{fig:mimoloop} of LTI system as in (\ref{LTI}) and the corresponding closed-loop system (\ref{LTIcl}). Assume that 
$\textrm{poles} ( H_i ) \subseteq \mathbb{C}_{< \alpha}$, for $i=1,2$. 
Then the poles of the Gang of Four transfer function are in $\mathbb{C}_{< \alpha}$ 
provided
\[   I + k H_1(s) H_2(s) \textrm{ is SDD }  \;  \forall \, s
\in \partial \mathbb{C}_{<\alpha}, \; \forall \, k \in (0,1]. \]

Alternatively, provided:
\[  \sum_{j \neq i} | [H_1(s) H_2(s)]_{ij} | < m_i(s) \, \quad \forall \, s \in \partial \mathbb{C}_{< \alpha}, \, \forall \, i \]
where
\end{thm}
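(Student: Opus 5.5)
The plan is to reduce the two-subsystem statement to the single-loop results, Theorem \ref{sufonly} and Theorem \ref{DDsuf_}. The interconnection is internally stable if and only if every Gang of Four entry has poles in $\mathbb{C}_{<\alpha}$, so we must locate the zeros of the return difference and rule out cancellations with open-loop poles.

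\textbf{Step 1: stack the loop.} Write the interconnection of Fig.~\ref{fig:mimoloop} as a unity negative feedback of the block-antidiagonal open loop
\[ H(s)=\begin{bmatrix} 0 & H_2(s)\\ -H_1(s) & 0\end{bmatrix}, \]
with the sign conventions chosen to match the Gang of Four expression. Since $\textrm{poles}(H)=\textrm{poles}(H_1)\cup\textrm{poles}(H_2)\subseteq\mathbb{C}_{<\alpha}$, the hypothesis of Theorem \ref{sufonly} holds. The closed-loop map of this stacked system, from $(v_1,v_2)$ to the loop signals, is exactly the Gang of Four matrix. Hence it suffices to show that $\det(I+kH(s))\neq 0$ for all $s\in\partial\mathbb{C}_{<\alpha}$ and all $k\in(0,1]$.

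\textbf{Step 2: Schur complement.} For the determinant we use
\[ \det\!\begin{bmatrix} I & kH_2\\ -kH_1 & I\end{bmatrix}=\det(I+k^2H_1H_2). \]
As $k$ ranges over $(0,1]$, $k^2$ also ranges over $(0,1]$. The hypothesis that $I+kH_1(s)H_2(s)$ is SDD for every $k\in(0,1]$ therefore gives, by the Horn--Johnson nonsingularity theorem, $\det(I+k^2H_1H_2)\neq 0$ on $\partial\mathbb{C}_{<\alpha}$ for all $k$. This is the only point where SDD enters.

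\textbf{Step 3: conclude.} Theorem \ref{sufonly} now places all poles of the stacked closed loop, and hence of all four Gang of Four blocks, in $\mathbb{C}_{<\alpha}$. The alternative condition involving $m_i(s)$ follows from the preceding Lemma, applied row by row with $H$ replaced by $H_1H_2$. I expect $m_i(s)=\min_{\mu\ge1}|\mu+[H_1H_2]_{ii}(s)|$, and then the Lemma's equivalence makes the two forms of the hypothesis interchangeable.

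\textbf{Main obstacle.} The delicate step is Step 1, because Theorem \ref{sufonly} is stated for a minimal-type realization $(A,B,C)$ with unity feedback. I would first build the stacked state-space realization $\mathrm{diag}(A_1,A_2)$ with cross-coupled $B$ and $C$. I would then check that the paper's inclusion $\textrm{poles}(H_{cl})\subseteq\textrm{zeros}(\det(I+H))\cup\textrm{poles}(H)$ applies to this realization. If the sign convention gives $I-H_1H_2$ rather than $I+H_1H_2$, I would absorb the sign into $H_1$.
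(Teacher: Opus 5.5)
Your proof is correct, but it takes a different route from the one the paper implies. The paper gives no separate proof of Theorem~\ref{DDsuf}. It presents the result as an immediate consequence of Theorem~\ref{DDsuf_} applied to the loop gain $H_1H_2$, with the Lemma supplying the $m_i$ form.

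On that route one first notes $\textrm{poles}(H_1H_2)\subseteq\textrm{poles}(H_1)\cup\textrm{poles}(H_2)\subseteq\mathbb{C}_{<\alpha}$. The hypothesis, with the same $k$, then places the poles of $H_1H_2(I+H_1H_2)^{-1}$ in $\mathbb{C}_{<\alpha}$, and hence those of $(I+H_1H_2)^{-1}=I-H_1H_2(I+H_1H_2)^{-1}$. The remaining blocks $H_2(I+H_1H_2)^{-1}$, $(I+H_1H_2)^{-1}H_1$ and $(I+H_2H_1)^{-1}=I-H_2(I+H_1H_2)^{-1}H_1$ follow as sums and products of factors whose poles already lie in $\mathbb{C}_{<\alpha}$.

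Your block-antidiagonal stacking instead produces all four blocks at once as $(I+H)^{-1}$. This is the standard internal-stability formulation and makes the propagation step automatic. The price is the Schur-complement identity $\det(I+kH)=\det(I+k^2H_1H_2)$ and the harmless reparametrization $k\mapsto k^2$. The cascade route is shorter and leaves the hypothesis's $k$ untouched, but needs a separate algebraic step to pass from one sensitivity to the other three blocks.

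A few small remarks:
\begin{itemize}
\item Your ``main obstacle'' is not one. Theorem~\ref{sufonly} and the inclusions behind it concern only the rational matrices $H$ and $H_{cl}$. Any realization of the stacked transfer matrix therefore qualifies, minimal or not (for instance $\textrm{diag}(A_1,A_2)$ with a cross-coupled output matrix). Your $H$ already yields $I+k^2H_1H_2$ with the right sign, so nothing needs absorbing.
\item You should state explicitly that $(I+H)^{-1}=I-H_{cl}$. Theorem~\ref{sufonly} speaks of $H_{cl}$, the map $v\mapsto y$, while the Gang of Four is the map $v\mapsto u$; this identity shows they have the same poles.
\item When $\textrm{Re}([H_1H_2]_{ii})\le -1$, the paper's $m_i$ uses $\textrm{Im}([H_1H_2]_{ii})$ rather than its absolute value. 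It is therefore only bounded above by your $\min_{\mu\ge 1}|\mu+[H_1H_2]_{ii}|$, not equal to it. The two hypotheses are thus not literally interchangeable, but the paper's version implies yours, which is the direction sufficiency needs.
\end{itemize}
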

\vspace{-2em}
\[ m_i=\left \{ \begin{array}{cl} \textrm{Im}( [H_1 H_2]_{ii} ) & \textrm{if } \textrm{Re} ( [H_1 H_2]_{ii} ) \leq -1 \\
| 1 + [H_1 H_2]_{ii} | & \textrm{if }\textrm{Re} ( [H_1 H_2]_{ii} ) > -1 \end{array} \right . . \]

\subsection{Block Diagonal Dominance (BDD)} \label{subsec:BDD_theory}
% When variables, as in a representation of the $dq$ power grid, can be naturally paired according to the location of each bus, relevant matrices can then be interpreted as block-partitioned and involving $2 \times 2$ blocks.

As shown in Fig.~\ref{fig:IBRs_RoG} and is discussed in Section \ref{sec:app DD in power system}, when the power system is modeled in the $dq$ reference frame, system variables can be naturally paired according to their bus locations. Consequently, the associated matrices can be interpreted in block-partitioned form, with each block having a dimension $2 \times 2$. This allows for tighter decentralized tests of non-singularity.
 Consider a block matrix $M$, as shown below:
\begin{equation}
\label{bmat}
M = \left [ \begin{array}{ccccc}  M_{1,1} & M_{1,2} & \ldots & & M_{1, Q} \\ M_{2,1} & M_{2,2} & \ldots & & M_{2,Q}  \\ \vdots & & \ddots&  & \vdots \\ M_{Q,1} & M_{Q,2} & \ldots & & M_{Q,Q}   \end{array} \right ],
\end{equation}
where $Q$ is the number of blocks considered. We aim to introduce an extension of the classic notion of diagonal dominance for block partitioned matrices along the lines of \cite{varga,vargabook}.
For any $i \in \{1, \ldots, Q \}$ we define $M_{i,-i}$ the following matrix:
\[    M_{i,-i} = [ M_{i,1}, M_{i,2}, \ldots, M_{i,i-1}, M_{i,i+1}, \ldots, M_{i, Q} ]. \]
We are now ready to state a sufficient condition for non-singularity of $M$. 
\begin{thm}
    \label{blockDD}
 Let $M$ be a square complex valued block matrix as in (\ref{bmat}) and such that for all $i \in \{1,\ldots,Q \}$
it holds:
\begin{equation}
\label{relaxedDD}
1 > \| M_{i,i}^{-1} M_{i,-i} \|_{\infty}.
\end{equation}
Then, $M$ is non-singular. \    
\end{thm}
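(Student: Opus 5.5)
Note first that condition (\ref{relaxedDD}) presupposes that every diagonal block $M_{i,i}$ is invertible, so the expression $M_{i,i}^{-1}M_{i,-i}$ is well defined. The plan is to argue by contradiction, in the same spirit as the classical proof that an SDD matrix is non-singular (Theorem 6.1.10 in \cite{Horn1985}), but working block-wise with the $\infty$-norm. Suppose $M$ is singular and pick a nonzero vector $x$ with $Mx=0$. Partition $x$ conformally with the block structure (\ref{bmat}) as $x=[x_1^\top,\ldots,x_Q^\top]^\top$. Select the index $i$ of the block achieving the largest infinity norm, $\|x_i\|_\infty=\max_j \|x_j\|_\infty=\|x\|_\infty>0$.

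Next write out the $i$-th block row of $Mx=0$:
\[ M_{i,i}x_i+\sum_{j\neq i}M_{i,j}x_j=0 . \]
Since $M_{i,i}$ is invertible, this rearranges to $x_i=-M_{i,i}^{-1}M_{i,-i}\,x_{-i}$. Here $x_{-i}$ stacks all the blocks $x_j$ with $j\neq i$, in the same order as the block columns of $M_{i,-i}$.

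Taking infinity norms, and using that the matrix $\infty$-norm is the norm induced by the vector $\infty$-norm, gives
\[ \|x_i\|_\infty\le \|M_{i,i}^{-1}M_{i,-i}\|_\infty\,\|x_{-i}\|_\infty . \]
The vector $\infty$-norm of the stacked vector $x_{-i}$ is the maximum of $\|x_j\|_\infty$ over $j\ne i$. By the choice of $i$ this is at most $\|x_i\|_\infty$. Hence $\|x_i\|_\infty\le \|M_{i,i}^{-1}M_{i,-i}\|_\infty\|x_i\|_\infty<\|x_i\|_\infty$ by (\ref{relaxedDD}). Because $\|x_i\|_\infty>0$, this is a contradiction, so $M$ is non-singular.

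The only delicate point is making sure the norms are compatible. The theorem uses the induced $\infty$-norm of a rectangular matrix, so I must check that $\|x_{-i}\|_\infty$ is the maximum entry modulus over all blocks $j\neq i$. That is exactly why the maximizing block is chosen using the same $\infty$-norm. With any other choice (say the 2-norm on blocks) the chain of inequalities would not close. Everything else is routine.
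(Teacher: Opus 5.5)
Your proof is correct and follows essentially the same argument as the paper's proof in Appendix~\ref{blockthmproof}. You take a nonzero kernel vector, select the block of maximal $\infty$-norm, invert the diagonal block in that block row, and use the induced $\infty$-norm bound with $\|x_{-i}\|_\infty \le \|x_i\|_\infty$ to contradict (\ref{relaxedDD}), with invertibility of each $M_{i,i}$ taken as implicit in the hypothesis, just as the paper does.
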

For the sake of completeness, the proof of Theorem \ref{blockDD} is included in Appendix \ref{blockthmproof}.
Notice that, in the special case of $1 \times 1$ blocks, the condition boils down to the Strict Diagonal Dominance (SDD). However,  for the case of $2 \times 2$ blocks, it can be shown to be a milder requirement, as clarified below:
\begin{prop}
\label{blockbetter}
Assume that the strict diagonal dominance condition holds for $i=1,2$, viz.
\begin{equation*}
    \label{hp1}
\begin{array}{rcl}   |m_{11}|  &>& \sum_{j \neq 1} |m_{1j}| \\
 |m_{22}| &>& \sum_{j \neq 2} |m_{2j}|.
 \end{array}
 \end{equation*}
Then, the block diagonal dominance condition (\ref{relaxedDD}) holds, viz.
\[     1 >    \left \| \left [ \begin{array}{cc}  m_{11} & m_{12} \\ m_{21} & m_{22} \end{array} \right ]^{-1} \, \left [ \begin{array}{cccc} m_{13} & m_{14} & \ldots& m_{1n} \\ m_{23} & m_{24} & \ldots & m_{2n} \end{array} \right ] \right \|_{\infty}.  \]
\end{prop}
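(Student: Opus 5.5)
The plan is to use the characterization of $\|\cdot\|_\infty$ as the operator norm induced by the vector $\infty$-norm, and then run the classical Lévy–Desplanques argument (the proof that an SDD matrix is non-singular) on the $2\times 2$ block system. Throughout, write
\[ D=\begin{bmatrix} m_{11} & m_{12}\\ m_{21} & m_{22}\end{bmatrix}, \qquad R=\begin{bmatrix} m_{13} & \ldots & m_{1n}\\ m_{23} & \ldots & m_{2n}\end{bmatrix}, \]
and set $\rho_i=\sum_{j\ge 3}|m_{ij}|$ for $i=1,2$. The hypotheses then read $|m_{11}|>|m_{12}|+\rho_1$ and $|m_{22}|>|m_{21}|+\rho_2$.

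\emph{Step 1.} In particular $D$ is itself SDD, so it is non-singular by Theorem 6.1.10 of \cite{Horn1985}. Hence $X:=D^{-1}R$ is well defined.

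\emph{Step 2.} Note that $\|X\|_\infty=\max\{\|Xv\|_\infty : \|v\|_\infty\le 1\}$. The maximum is attained because the unit ball is compact and $v\mapsto\|Xv\|_\infty$ is continuous. It therefore suffices to show $\|Xv\|_\infty<1$ for every $v$ with $\|v\|_\infty\le 1$; strictness for the norm then follows from attainment.

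\emph{Step 3.} Fix such a $v$ and set $y=Xv\in\mathbb{C}^2$, so that $Dy=Rv$. By the triangle inequality, $|(Rv)_i|\le\rho_i\|v\|_\infty\le\rho_i$. Choose the index $i\in\{1,2\}$ with $|y_i|=\|y\|_\infty=:t$, and let $k\neq i$ be the other index. Row $i$ of $Dy=Rv$ is $m_{ii}y_i+m_{ik}y_k=(Rv)_i$. This gives
\[ |m_{ii}|\,t\le |m_{ik}|\,|y_k|+\rho_i\le |m_{ik}|\,t+\rho_i . \]
Hence $t\,(|m_{ii}|-|m_{ik}|)\le\rho_i<|m_{ii}|-|m_{ik}|$. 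Since $|m_{ii}|-|m_{ik}|>0$, we conclude $t<1$, as required.

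I do not expect a real obstacle. The only delicate points are the following.
\begin{itemize}
\item We should pick the row corresponding to the largest component of $y$, not a fixed row. This is exactly where both SDD hypotheses are needed.
\item The passage from the pointwise strict bound to the strict norm bound requires compactness of the unit ball.
\end{itemize}
Alternatively, one could compute $D^{-1}=\frac{1}{\det D}\begin{bmatrix} m_{22} & -m_{12}\\ -m_{21} & m_{11}\end{bmatrix}$ explicitly and bound the row sums of $D^{-1}R$ directly. This approach is messier because $|\det D|$ must be bounded from below, so I would use the argument above.
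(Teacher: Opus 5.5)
Your proof is correct, but it takes a different route from the paper. The paper uses exactly the explicit approach you set aside. With $A$ your $D$ and $R_i$ your $\rho_i$, it writes $A^{-1}$ via the adjugate and uses $|\det A|\ge |m_{11}m_{22}|-|m_{12}m_{21}|>0$. It then bounds the first row sum of $A^{-1}R$ by $(|m_{22}|R_1+|m_{12}|R_2)/|\det A|$ and substitutes $R_1<|m_{11}|-|m_{12}|$ and $R_2<|m_{22}|-|m_{21}|$. The numerator then telescopes to exactly $|m_{11}m_{22}|-|m_{12}m_{21}|$, so the determinant lower bound is not actually messy, and the second row follows by symmetry.

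Your L\'evy--Desplanques max-component argument never inverts $D$. It also works verbatim when the diagonal block is $p\times p$ for any $p$: the row of the largest component of $y$ gives $t\,(|m_{ii}|-\sum_{k\neq i}|m_{ik}|)\le\rho_i$. So it shows that row-wise SDD implies condition (\ref{relaxedDD}) for every block partition, whereas the paper's computation is tied to the $2\times 2$ inverse formula.

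In exchange, the paper's bound is never worse than yours. Each of its row-sum bounds is a convex combination of $\rho_1/(|m_{11}|-|m_{12}|)$ and $\rho_2/(|m_{22}|-|m_{21}|)$, whereas your bound is their maximum.

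Finally, your compactness step is unnecessary. Step 3 already gives the uniform bound $\|Xv\|_\infty\le\max_i \rho_i/(|m_{ii}|-|m_{ik}|)<1$ on the unit ball, which bounds the induced norm directly.
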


\emph{Proof.} See Appendix \ref{proofnonconservative}.
It is worth pointing out that the converse might not hold.

\begin{rem}
Although (\ref{relaxedDD}) is inspired by the theory of diagonal dominance for block-partitioned matrices, to the best of our knowledge it is a novel condition, milder than the classical one:
\[ \sum_{j \neq i} \| M_{ii}^{-1} M_{ij} \|_{\infty } < 1.  \]
The main reason for (\ref{relaxedDD}) not being picked up by the classical literature appears to be the aim of proposing a condition whose structure closely mimics that of SDD.
On this note, an even stronger requirement often reported in the literature is the following condition:
\[ \sum_{j \neq i} \| M_{ij} \|_{\infty } < \| M_{ii}^{-1} \|^{-1}.  \]
\end{rem}

\section{Decentralized Stability Conditions for Power Systems}\label{sec:app DD in power system}
%%%%%%%%%%%%%%%%%%%%%%%%%%%%%%%%%%%%%%%

%To capture inter-IBR coupling, define the off-diagonal block row
%\begin{equation}
%L_{i,-i}(s_{\alpha})
%:=
%\big[
%L_{i1}(s_{\alpha}) \ \cdots \ L_{i,i-1}(s_{\alpha}) \ 
%L_{i,i+1}(s_{\alpha}) \ \cdots \ L_{iN}(s_{\alpha})
%\big].
%\label{eq:Li_minus_i}
%\end{equation}

%%%%%%%%%%%%%%%%%%%%%%%%%%%%%%%%%%%%%%%%%

% In this section, we show how the block diagonal dominance (BDD) condition presented in Section~\ref{subsec:BDD_theory} can be applied in a decentralized way to ensure stability (and a minimum decay rate) of a power system during connection compliance of IBR plants, henceforth referred to as `IBR'.
In this section, we show how the block diagonal dominance (BDD) criterion presented in Section~\ref{subsec:BDD_theory} can be specialized to the $dq$-domain power setting to yield decentralized conditions for the connection compliance assessment of IBR plants, henceforth referred to as 'IBR', that ensure stability (and a minimum decay rate).

% \subsection{Decentralized Conditions Formulation in Power Systems Context}

Consider a power system with $N$ ($N>1$) upcoming IBRs to be connected at $N$ buses with the rest of the grid (RoG) as shown in Fig.~\ref{fig:IBRs_RoG}(a). These $N$ buses are the respective points of interconnection (PoI) of the IBRs. The RoG comprises existing IBRs, synchronous machines, loads, and the network. The overall system is partitioned into two subsystems - the upcoming IBRs and the RoG. Each IBR is represented as a $(2\times 2)$ admittance matrix in $dq$ domain. The subsystem of $N$ IBRs is thus a $(2N \times 2N)$ block-diagonal matrix given by:
\begin{equation} \label{eq:Y_block}
\mathbf{Y_I}(s) =
\begin{bmatrix}
\left[
\begin{matrix}
Y_{dd}^1 & Y_{dq}^1 \\
Y_{qd}^1 & Y_{qq}^1
\end{matrix}
\right] & 0 & \cdots & 0 \\
0 &
\left[
\begin{matrix}
Y_{dd}^2 & Y_{dq}^2 \\
Y_{qd}^2 & Y_{qq}^2
\end{matrix}
\right] & \cdots & 0 \\
\vdots & \vdots & \ddots & \vdots \\
0 & 0 & \cdots &
\left[
\begin{matrix}
Y_{dd}^N & Y_{dq}^N \\
Y_{qd}^N & Y_{qq}^N
\end{matrix}
\right]
\end{bmatrix} 
\end{equation}

The RoG subsystem is represented as a $(2N \times 2N)$ impedance matrix which is a full matrix given by:
\begin{equation}
\label{eq:Zgrid}
\setlength{\arraycolsep}{3pt}
\resizebox{\columnwidth}{!}{$
\mathbf{Z}_G(s)=
\left[
\begin{array}{cccc}
\Zblk{1}{1} & \Zblk{1}{2} & \cdots & \Zblk{1}{N} \\
\Zblk{2}{1} & \Zblk{2}{2} & \cdots & \Zblk{2}{N} \\
\vdots      & \vdots      & \ddots & \vdots      \\
\Zblk{N}{1} & \Zblk{N}{2} & \cdots & \Zblk{N}{N}
\end{array}
\right]
$}
\end{equation}

Note that Laplace notation $(s)$ is dropped for all the elements of $\mathbf{Y_I}(s)$ and $\mathbf{Z_G}(s)$ for brevity. In practice, $\mathbf{Y_I}(s)$ and $\mathbf{Z_G}(s)$ are obtained using dynamic frequency scans \cite{GIST2023Shah} to estimate their frequency response, followed by vector fitting \cite{vectorfitting} to derive the corresponding transfer function.

\begin{figure}[!t]
\centering
\includegraphics[width=1.\linewidth]{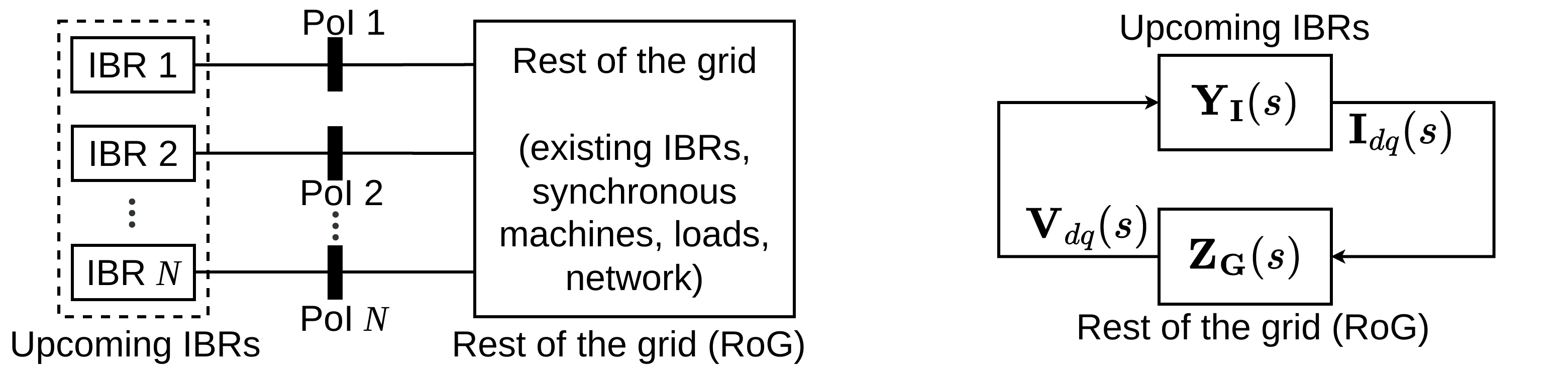}
\caption{Left subfigure: A power system with $N$ upcoming IBR plants to be connected to the rest of the grid (RoG) comprising existing IBR plants, synchronous machines, loads, and the network. The busbars represent the points of interconnection (PoI) of the upcoming IBR plants. Right subfigure: Positive feedback structure with two subsystems - upcoming IBR admittance $\mathbf{Y_I}(s)$ and rest of the grid (RoG) impedance $\mathbf{Z_G}(s)$}
\label{fig:IBRs_RoG}
\end{figure}

As shown in Fig.~\ref{fig:IBRs_RoG}, the IBR-RoG closed-loop setup is generally modeled with positive feedback, while the diagonal dominance theory in Section~\ref{sec:DD_theory} is formulated for the negative feedback structure shown in Fig.~\ref{fig:mimoloop}. Therefore, to cast the IBR-RoG setup into the same negative feedback form, we define
% As shown in Fig.~\ref{fig:IBRs_RoG}, the IBR-RoG interaction is inherently of positive feedback form, while the diagonal dominance theory in Section \ref{sec:DD_theory} is formulated for the negative feedback structure shown in Fig.~\ref{fig:mimoloop}. Therefore, to cast the IBR-RoG interaction into the same negative feedback form, we define

% The feedback structure with the two subsystems $\mathbf{Y_I}(s)$ and $\mathbf{Z_G}(s)$ are shown in Fig.~\ref{fig:IBRs_RoG}. Referring to Fig.~\ref{fig:mimoloop}, $\mathbf{Y_I}(s)$ and $-\mathbf{Z_G}(s)$ correspond to $H_1$ and $H_2$, respectively. The negative sign reflects the fact that IBR-RoG interaction is inherently positive feedback, while diagonal dominance theory in Section \ref{sec:DD_theory} is formulated for negative feedback. Accordingly, define
\begin{equation}
\mathbf{L}(s) = -\mathbf{Y_I}(s) \mathbf{Z_G}(s), 
\label{eq:loop_blocks}
\end{equation}
which is used hereafter for brevity. Partition $\mathbf{L}(s)$ into $2\times 2$ $dq$-domain blocks as
\begin{equation*}
\mathbf{L}^{(i,j)}(s) = -\mathbf{Y_I}^{i}(s)\mathbf{Z_G}^{(i,j)}(s)
=
\begin{bmatrix}
L_{dd}^{ij}(s) & L_{dq}^{ij}(s) \\
L_{qd}^{ij}(s) & L_{qq}^{ij}(s)
\end{bmatrix}.
\label{eq:loop_block}
\end{equation*}

Based on Theorem~\ref{DDsuf} and similar feedback setup in Fig. \ref{fig:mimoloop}, the closed-loop stability (and a minimum decay rate) of the overall interconnected power system in Fig.~\ref{fig:IBRs_RoG}(b) can be certified if 
\begin{equation}
 \mathbf{M}(s) = \mathbf{\mu}\mathbf{I}+\mathbf{L}(s) \textrm{ is SDD }  \;    \forall \, s
\in \partial \mathbb{C}_{<\alpha}, \; \forall \, \mu \in [1,+\infty].
\label{eq:strict_diagonal_dominance_condition}
\end{equation}

Due to the block diagonal structure of $\mathbf{Y_I}(s)$ in \eqref{eq:Y_block}, the diagonal dominance condition admits a fully decentralized interpretation.
To make this explicit, $\mathbf{M}(s) \in \mathbb{C}^{2N\times2N}$ in \eqref{eq:strict_diagonal_dominance_condition} is partitioned into $2\times 2$ blocks, given by:
\begin{subequations}\label{eq:H_block_dq_expansion_decentralized}
\begingroup
\setlength{\arraycolsep}{2pt}
\renewcommand{\arraystretch}{1.0}
\begin{align}
\mathbf{M}^{(i,i)}(s)
&=
\left[
\begin{array}{@{}cc@{}}
\mu+L_{dd}^{ii}(s) & L_{dq}^{ii}(s)
\\[0.4ex]
L_{qd}^{ii}(s) & \mu+L_{qq}^{ii}(s)
\end{array}
\right],
\label{eq:H_block_dq_expansion_decentralized_1}
\\
\mathbf{M}^{(i,j)}(s)
&=
\left[
\begin{array}{@{}cc@{}}
L_{dd}^{ij}(s) & L_{dq}^{ij}(s)
\\[0.4ex]
L_{qd}^{ij}(s) & L_{qq}^{ij}(s)
\end{array}
\right].
\label{eq:H_block_dq_expansion_decentralized_2}
\end{align}
\endgroup
\end{subequations}
Then,
\begin{equation}
\mathbf{M}^{(i,-i)}
=
\big[
\mathbf{M}^{(i,1)},\,
\ldots,\,
\mathbf{M}^{(i,i-1)},\,
\mathbf{M}^{(i,i+1)},\,
\ldots,\,
\mathbf{M}^{(i,N)}
\big].
\label{eq:M_i_minus_i}
\end{equation}
Note that Laplace notation $(s)$ in \eqref{eq:M_i_minus_i} is dropped for brevity. Notably, $\mathbf{M}^{(i,i)}$ in \eqref{eq:H_block_dq_expansion_decentralized_1} and $\mathbf{M}^{(i,-i)}$ in \eqref{eq:M_i_minus_i} correspond to the blocks $M_{i,i}$ and $M_{i,-i}$ in \eqref{relaxedDD}.
% Accordingly, the block row $i$ corresponding to the upcoming IBR at bus $i$ is composed solely of its own admittance $\mathbf{Y_I}^i(s)$ and only those elements $\mathbf{Z_G}^{(i,j)}(s)$ of the RoG impedance that are connected to bus $i$.

Based on Theorem~\ref{blockDD}, we define the block diagonal dominance (BDD) margin of the $i$-th IBR:
\begin{equation}
\delta^i(s)
=
\inf_{\mu \in [1,+\infty)}
\left[
1-
\left\|
\left(\mathbf{M}^{(i,i)}\right)^{-1}\mathbf{M}^{(i,-i)}
\right\|_{\infty}
\right].
\label{eq:Delta_i}
\end{equation}
% \begin{equation}
% \delta^i(s)
% =
% 1-\left\|\left(\mathbf{M}^{(i,i)}(s)\right)^{-1}\mathbf{M}^{(i,-i)}(s)\right\|_{\infty}.
% \label{eq:Delta_i}
% \end{equation}
The closed-loop stability and specified decay rate of the overall system in Fig.~\ref{fig:IBRs_RoG}(b) can be certified if for all upcoming IBRs, $i \in \{1,\ldots,N \}$, the following BDD-based condition holds:
\begin{equation}
\delta^i(s) > 0,   \;    \forall \, s
\in \partial \mathbb{C}_{<\alpha}.
\label{eq:bdd_condition_power}
\end{equation}
% blocks on the $i$th row of $\mathbf{L}(s)$ are constructed solely from the local admittance dynamics $\mathbf{Y}_c^i(s)$ of the $i$th IBR and the coupling grid impedances between its PoI and the PoIs of other upcoming IBRs. 

% Under this formulation, the $M$ in \eqref{bmat} is $\mu \mathbf{I}+\mathbf{H}$
% Under this formulation, the diagonal dominance condition reduces to verifying that, for each IBR
% $i \in \{1,\ldots,N\}$ and for each axis $k \in \{D,Q\}$,
% \begin{equation}
% \left|
% 1 + H_{kk}^{(i,i)}(s)
% \right|
% >
% \sum_{\substack{j=1 \\ j \neq i}}^{N}
% \;
% \sum_{m \in \{D,Q\}}
% \left|
% H_{km}^{(i,j)}(s)
% \right|,
% \quad \forall s
% \in \partial \mathbb{C}_{<\alpha}. 
% \label{eq:fully_decentralized_dd_condition}
% \end{equation}
According to \eqref{eq:loop_blocks}, \eqref{eq:H_block_dq_expansion_decentralized}--\eqref{eq:Delta_i}, the condition \eqref{eq:bdd_condition_power} can be assessed in a decentralized manner, requiring only the local admittance $\mathbf{Y_I}^i(s)$ from each upcoming IBR, and the grid-impedance dynamics at that PoI, corresponding to each row in \eqref{eq:Zgrid}.
No information about other IBRs needs to be disclosed. Consequently, the vendor or developer of each IBR can independently verify the condition without knowledge of the other IBRs in the system. If each upcoming IBR satisfies the condition at its respective PoI, the overall system is guaranteed to be stable with the specified minimum decay rate.

In the following Section \ref{sec:single-ibr case study}, the diagonal dominance criterion in Section \ref{sec:DD_theory} is applied to a single IBR connected to a Thevenin-equivalent grid, with the emphasis on its ability to certify closed-loop stability and a prescribed minimum decay rate. It should be noted that with only one IBR, this case serves mainly to illustrate the certification procedure itself. The full block coupling mechanism underlying Theorem~\ref{blockDD} and the derived condition~\eqref{eq:bdd_condition_power}, arises only in the multi-IBR setting presented later in Section~\ref{multi-ibr case study}. 
% \gl{In the following Section} \ref{sec:single-ibr case study}, \gl{a preparatory example of a single IBR connected to a Thevenin-equivalent grid is presented to illustrate, how the diagonal dominance criterion can be used to certify closed-loop stability and prescribed minimum decay rate. It should be noted, however, that with only one IBR, this simple case serves mainly to illustrate the certification procedure itself. The full block coupling mechanism underlying Theorem}~\ref{blockDD} \gl{and the derived condition}~\eqref{eq:bdd_condition_power}, \gl{arises only in the multi-IBR setting presented later in Section}~\ref{multi-ibr case study}. 
% Next, the proposed decentralized condition is evaluated on a single-IBR system connected to a Thevenin-equivalent grid, with emphasis on its ability to certify closed-loop stability and a prescribed minimum decay rate, as well as on its degree of conservativeness. Due to simplistic single IBR setup, the BDD-based condition becomes equivalent to the strict diagonal dominance (SDD)-based condition, as discussed next.

\section{Stability and Decay Rate Certification in Single-IBR Power System}\label{sec:single-ibr case study}
% In this section, the proposed decentralized condition is first illustrated on a simple single-IBR benchmark connected to a Thevenin-equivalent grid, as shown in Fig.~\ref{fig:s_ibr_thev}. 
We start with a single-IBR connected to a Thevenin-equivalent grid as shown in Fig.~\ref{fig:s_ibr_thev}. 
% \gl{to demonstrate how the frequency-domain certification, based on diagonal dominance criterion, of closed-loop stability and the prescribed minimum decay rate works in the simplest setting.} 
The IBR operates in grid-following (GFL) mode regulating both active power and voltage magnitude at the point of interconnection (PoI) to 1 pu. The Thevenin equivalent impedance is characterized by $\textrm{SCR}=3$ and $X/R=10$. For the given operating operating point, $2\times2~dq$ MIMO grid impedance $Z_G(s)$ and IBR admittance $Y_I(s)$ transfer functions are obtained using dynamic frequency scans followed by vector fitting \cite{aemo2025frequency, GIST2023Shah, vectorfitting}. In this setting, since there is only one upcoming IBR, the multi-block structure underlying \eqref{eq:Y_block}--\eqref{eq:loop_blocks} collapses and the block matrix $\mathbf{M}^{(i,-i)}$ of \eqref{eq:M_i_minus_i} is absent. Accordingly, BDD-based condition \eqref{eq:bdd_condition_power} is not a meaningful nontrivial condition in this setting. Instead, the certification of closed-loop stability and decay rate is carried out through the corresponding scalar variant, namely the strict diagonal dominance (SDD)-based condition, as derived in \eqref{eq:local_sdd_stab_condition}. 
For this single-IBR setting, \eqref{eq:loop_blocks} can be written explicitly as follows:
% In this section, the proposed decentralized condition is first illustrated on a simple single-IBR benchmark connected to a Thevenin-equivalent grid, as shown in Fig.~\ref{fig:s_ibr_thev}. The IBR operates in grid-following (GFL) mode regulating both active power and voltage magnitude at the point of interconnection (PoI) to 1 pu. The Thevenin equivalent impedance is characterized by a $\textrm{SCR}=3$ and $X/R=10$. For the given operating operating point, $2\times2~dq$ MIMO grid impedance $Z_G(s)$ and IBR admittance $Y_I(s)$ transfer functions are obtained using dynamic frequency scans followed by vector fitting. 
% In this case, since there is only one upcoming IBR, the multi-block structure in \eqref{eq:Y_block}--\eqref{eq:loop_blocks}, and \eqref{eq:M_i_minus_i} collapses. As a result, BDD boils down to strict diagonal dominance (SDD). For this system, \eqref{eq:loop_blocks} can be written explicitly as follows:
\begin{equation*}
\mathbf{L}(s) = -\mathbf{Y_I}(s)\mathbf{Z_G}(s)
=
\begin{bmatrix}
L_{dd}(s) & L_{dq}(s) \\
L_{qd}(s) & L_{qq}(s)
\end{bmatrix}.
\label{eq:loop_gain}
\end{equation*}

Following Theorem~\ref{DDsuf}, we define
% \begin{subequations*}
% \label{eq:loop_gain_dq}
\begin{align*}
m_{dd}(s)
&=
\begin{cases}
\operatorname{Im}\!\left(L_{dd}(s)\right), & \textrm{if } \operatorname{Re}\!\left(L_{dd}(s)\right)\leq -1, \\[0.6ex]
\left|1+L_{dd}(s)\right|, & \textrm{if } \operatorname{Re}\!\left(L_{dd}(s)\right)> -1 ,
\end{cases}
% \label{eq:loop_gain_dd}
\\[1ex]
m_{qq}(s)
&=
\begin{cases}
\operatorname{Im}\!\left(L_{qq}(s)\right), & \textrm{if } \operatorname{Re}\!\left(L_{qq}(s)\right)\leq -1, \\[0.6ex]
\left|1+L_{qq}(s)\right|, & \textrm{if } \operatorname{Re}\!\left(L_{qq}(s)\right)> -1 ,
\end{cases}
% \label{eq:loop_gain_qq}
\end{align*}
% \end{subequations*}
which represent the effective diagonal contributions entering the SDD assessment. Based on these quantities, the corresponding local margins can be defined by comparing the effective diagonal contribution with the magnitude of the associated off-diagonal term on each row, namely
\begin{subequations}
\begin{align}
\delta_{d}(s) 
&= m_{dd}(s) - \left|L_{dq}(s)\right|,
\label{eq:delta_id} \\
\delta_{q}(s) 
&= m_{qq}(s) - \left|L_{qd}(s)\right|.
\label{eq:delta_iq}
\end{align}
\end{subequations}

According to Theorem~\ref{DDsuf}, the condition in \eqref{eq:strict_diagonal_dominance_condition} holds only when both \eqref{eq:delta_id} and \eqref{eq:delta_iq} are strictly positive simultaneously. We define the following strict diagonal dominance (SDD) margin of the upcoming IBR:
\begin{equation}
\delta(s)
= \min\!\big(\delta_{d}(s),\,\delta_{q}(s)\big),
\label{eq:delta_i}
\end{equation}
% as a complement to \eqref{eq:Delta_i}.
which combines the two inequalities in a single quantity. The closed-loop stability and the specified decay rate of the system in Fig.~\ref{fig:s_ibr_thev} can be certified if the following SDD-based condition holds:
\begin{equation}
\delta(s) > 0,   \;    \forall \, s
\in \partial \mathbb{C}_{<\alpha}.
\label{eq:local_sdd_stab_condition}
\end{equation}

% Thus, the local minimum decay-rate certificate condition (equivalent to BDD for this system) is 
% \begin{equation}
% \delta_{dq}(s_{\alpha}) > 0, \qquad \forall \omega \in \Omega.
% \label{eq:local_sdd_condition}
% \end{equation}
% Similarly, the condition to certify stability becomes 
% \begin{equation}
% \delta_{dq}(j\omega) > 0, \qquad \forall \omega \in \Omega.
% \label{eq:local_sdd_stab_condition}
% \end{equation}
For convenience, the complex variable on the boundary $\partial \mathbb{C}_{< \alpha}$ of the certification region is from now on denoted by
\begin{equation}
s_{\alpha} = \alpha + j\omega, \qquad \omega \in \Omega.
\label{eq:s_alpha}
\end{equation}
where $\alpha \le 0$ is the certification threshold, and $\Omega$ denotes the frequency set of interest. The variable $s$ in \eqref{eq:delta_id}, \eqref{eq:delta_iq} and \eqref{eq:delta_i} can be replaced by $s_{\alpha}$.
% Equivalently, $s_{\alpha}\in \partial \mathbb{C}_{<\alpha}$, where
% \[
% \partial \mathbb{C}_{<\alpha}
% =
% \{\, s:\exists\,\omega\in\mathbb{R},\ s=\alpha+j\omega \,\}.
% \]
% With this notation, the conditions in the corresponding theorem can be evaluated directly on $s_{\alpha}$. 

In particular, when $\alpha=0$, \eqref{eq:s_alpha} becomes $s_{\alpha}=j\omega$,
which corresponds to the closed-loop stability certification case. 
% When $\alpha<0$, $s_{\alpha}=\alpha+j\omega$ is used to certify the decay rate requirement $\Re(\lambda)<\alpha$.

% Replacing $s$ in \eqref{eq:local_sdd_stab_condition} with $s_{\alpha}$ yields the following equivalent condition:
% \begin{equation}
% \delta(s_{\alpha}) > 0, \qquad \omega \in \Omega.
% \label{eq:local_sdd_stab_condition_s_alpha}
% \end{equation}

% This setup is used to demonstrate how the diagonal dominance condition can be employed both to certify closed-loop stability and to enforce a specific performance prescribed in terms of minimum decay rate. It also highlights the conservative nature of the compliance test relative to the actual onset of instability or inadequate decay rate.

% \subsubsection{Certificate of stability}
\subsection{Certificate of stability}
% For stability certificate, the frequency response of $Z_G(s)$ and $Y_I(s)$ is evaluated at $s=j\omega$. 
% and is shown in Fig. \ref{fig:bode_s_ibr_thev}.
%If $L(j\omega)$ denotes the loop gain, i.e., $L(j\omega) = Y_I(j\omega)Z_G(j\omega)$, then the BDD condition in its reduced form (equivalent to corresponding SDD) is given below:
% \begin{equation}
% \begin{aligned}
% \left| 1+L_{dd}(j\omega) \right| > \left| L_{dq}(j\omega) \right|,\\ 
% \quad
% \left| 1+L_{qq}(j\omega) \right| > \left| L_{qd}(j\omega) \right|,
% \label{eq:dd_conditions}
% \end{aligned}
% \end{equation}
% for $\omega$ of interest.

\begin{figure}[t!]
    \centering
    \includegraphics[width=1\linewidth]{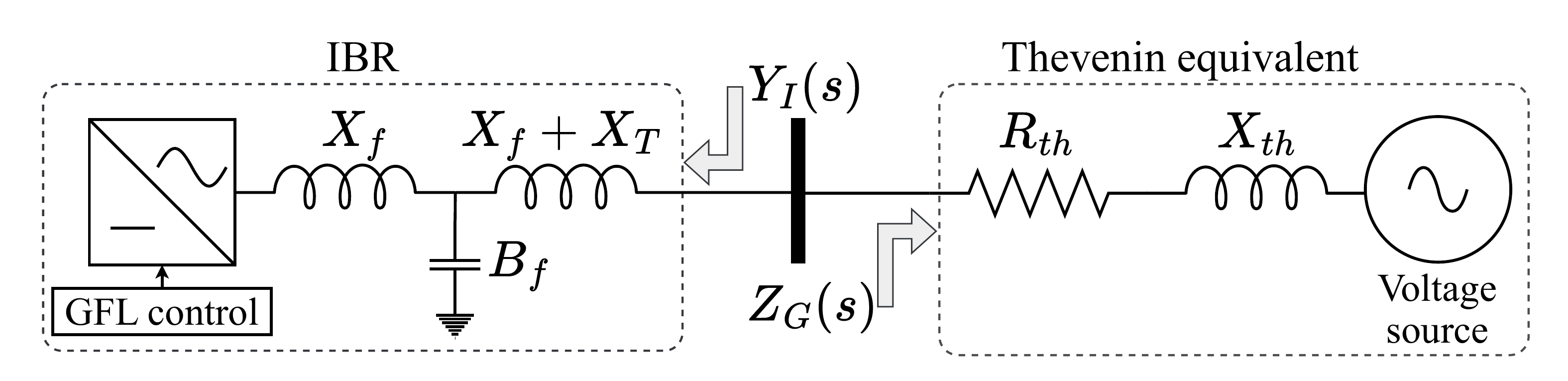}
    \caption{Circuit diagram of single IBR connected to a Thevenin equivalent of a power system. $Z_G(s)$ and $Y_I(s)$ represent impedance and admittance models of the grid and IBR, respectively.}
    \label{fig:s_ibr_thev}
\end{figure}

% \begin{figure}[t!]
%     \centering
%     \includegraphics[width=1\linewidth]{New_Figures/fig_bode_YZ.pdf}
%     \caption{Frequency response of IBR admittance $Y_I(s)$, and Thevenin equivalent impedance, $Z_G(s)$, evaluated at $s=j\omega$, for an IBR in GFL mode at $\textrm{SCR}=3$ and $X/R=10$.}
%     \label{fig:bode_s_ibr_thev}
% \end{figure}

 The stability condition of \eqref{eq:local_sdd_stab_condition} is applicable when both subsystems are standalone stable. This is consistent with practical grid-connection compliance requirements, under which a standalone IBR must be stable. Also, the Thevenin equivalent is passive and therefore intrinsically stable.

Fig.~\ref{fig:DD_thev_alpha0} illustrates the assessment of condition \eqref{eq:local_sdd_stab_condition} by plotting  $\delta_d(j\omega)$ and $\delta_q(j\omega)$ across the frequency of intersts (1 Hz to 500 Hz required by NESO \cite{NESO2025Guidance}) for different SCRs values ranging from 3 to 1. From SCR $=3$ to $1.7$, both quantities remain positive over the entire frequency range. 
% This indicates that the diagonal dominance condition is satisfied on both the $d$- and $q$-axes.Consequently, the proposed DD-based criterion certifies the stability of the closed-loop system for SCR $>1.7$. 
Consequently, the SDD-based condition of \eqref{eq:local_sdd_stab_condition} is satisfied, certifying the stability of the closed-loop system for SCR $>1.7$. 
However, with a decrease in SCR, a dip gradually appears in the $\delta_q(j\omega)$ curve at low frequencies. The results show that the local margin \eqref{eq:delta_iq} is no longer strictly positive throughout the frequency range when the SCR falls below $1.7$, indicating that the SDD-based condition is not satisfied anymore and therefore closed-loop stability cannot be guaranteed.
\begin{figure}[t!]
    \centering
    \includegraphics[width=1\linewidth]{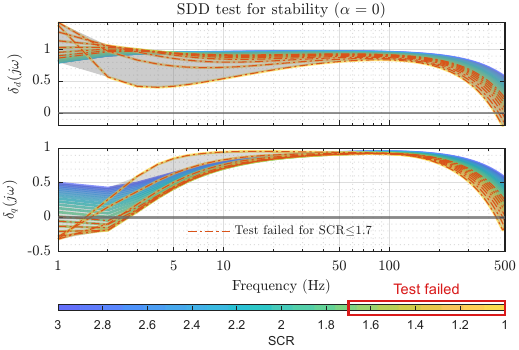}
    % \caption{Diagonal-dominance based stability margins, $\delta_d(j\omega)$ (top) and $\delta_q(j\omega)$ (bottom), for a single IBR connected to a Thevenin grid equivalent with $\alpha=0$. Solid curves are color-coded by SCR (grid strength falls from blue to yellow). Dashed red traces highlight operating points that fail the diagonal-dominance test (here, for $\mathrm{SCR}\leq 1.7$).}
    \caption{Local SDD margins $\delta_d(j\omega)$ (top) and $\delta_q(j\omega)$ (bottom), for a single IBR connected to a Thevenin grid equivalent. All traces are color-coded by SCR (grid strength falls from blue to yellow). Dashed red traces highlight operating points that fail the SDD test (here, for $\mathrm{SCR}\leq 1.7$).}
    \label{fig:DD_thev_alpha0}
\end{figure}

% the following figure is in png because when exported as eps the shading has problem with transparency/opaqueness and when exported with pdf, it crops from the right. so this is a png with 900 dpi
\begin{figure}[t!]
    \centering
    \includegraphics[width=1\linewidth]{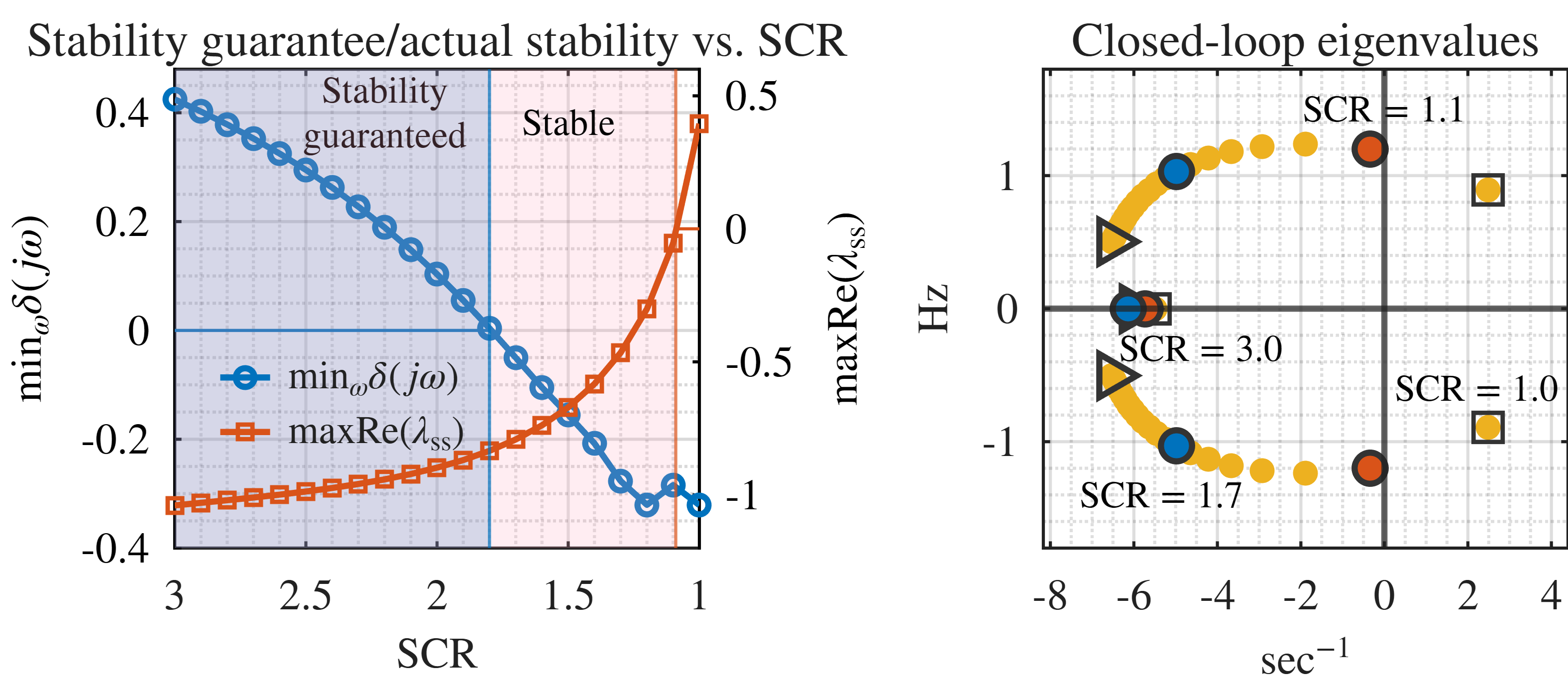}
    \caption{Left: minimum SDD margin $\min_{\omega}\delta(j\omega)$ over the frequency range of interest (blue, left axis) and the real part of least-damped sub-synchronous (SS) mode $\max\,\mathrm{Re}(\lambda_{\mathrm{ss}})$ (red, right axis). Blue shaded region denotes guaranteed stable SCR range and red shaded region shows stability range from eigenvalue analysis. Right: closed-loop SS modes for varying SCR from 3 to 1. Blue circles mark eigenvalues at stability guarantee threshold, red circles mark eigenvalues at actual stability threshold. SCR sweep start has triangle marker and the end has square marker.}
    \label{fig:margin_ev_thev_alpha0}
\end{figure}

To validate the stability certificate, Fig.~\ref{fig:margin_ev_thev_alpha0} compares the SDD margin defined in \eqref{eq:delta_i} with the actual small-signal stability boundary obtained from eigenvalue analysis as SCR decreases. The left subfigure plots two indicators: the minimum SDD margin over the frequency range of interest, $\min_{\omega}\delta(j\omega)$, and the real part of the least-damped sub-synchronous (SS) mode, $\max\,\mathrm{Re}(\lambda_{\mathrm{ss}})$. The SDD-based certificate holds when $\min_{\omega}\delta(j\omega)>0$ (for SCR $>1.7$). This is confirmed by the closed-loop eigenvalues shown in the right subfigure of Fig.~\ref{fig:margin_ev_thev_alpha0}, where all eigenvalues lie strictly in the left-half complex plane. As the SCR decreases, the SDD margin reduces monotonically and reaches zero at SCR = 1.7, prior to any eigenvalue crossing into the right-half plane. The system becomes oscillatory when SCR falls below approximately 1.1, where a poorly damped mode reaches the imaginary axis. As illustrated in Fig.~\ref{fig:margin_ev_thev_alpha0}, the stability region certified by the SDD-based condition is a subset of the true stability region identified by eigenvalue analysis, thereby demonstrating its conservative nature.

% The SDD-based certificate is valid as long as $\min_{\omega}\delta(j\omega)>0$, which occurs for relatively larger SCR. As the grid weakens, this minimum decreases monotonically and reaches zero before any SS eigenvalue crosses into the right-half plane. This shows that the proposed condition loses its guarantee earlier than the actual onset of instability, thereby demonstrating its conservative nature. Nevertheless, throughout the region where the SDD condition certifies stability, the actual closed-loop eigenvalues (shown on the right in Fig.~\ref{fig:margin_ev_thev_alpha0}) confirm that the system is indeed stable. When SCR $>1.7$, all eigenvalues lie strictly in the left-half complex plane. This observation is consistent with the stability estimation obtained from the SDD-based analysis. The system becomes oscillatory when SCR falls below $1.1$, where a poorly damped mode reaches the imaginary axis. Therefore, compared with the true stability margin obtained from eigenvalue analysis, the SDD condition yields a stability threshold of SCR $\approx 1.7$, whereas the actual stability limit occurs at SCR $\approx 1.1$.

% \subsubsection{Certificate of minimum decay-rate}
\subsection{Certificate of minimum decay rate}
% In practical power systems, grid-connection requirements typically extend beyond closed-loop stability and impose minimum decay rate constraints to avoid poorly damped oscillatory behavior. In particular, an IBR to be connected should not introduce poorly damped oscillations once integrated into the system. Therefore, beyond certifying closed-loop stability, the SDD condition can be extended to certify a prescribed minimum decay rate, corresponding to $\alpha\leq0$ in (\ref{eq:s_alpha}).

% Specifically, a desired decay rate margin can be enforced by applying the SDD-based analysis to a shifted open-loop transfer function. Based on Theorem \ref{DDsuf}, if the closed-loop system is required to satisfy a minimum decay rate requirement such that the real parts of all eigenvalues remain smaller than a prescribed threshold $\alpha$, then the dynamics of both the upcoming IBR and the existing grid are evaluated after a rightward shift of $-\alpha$ along the real axis in the complex plane. To certify a prescribed minimum decay rate, the SDD condition is assessed on the shifted vertical contour in (\ref{eq:s_alpha}), rather than on the imaginary axis. 

In practical power systems, minimum decay-rate requirements are often imposed in addition to mere stability. According to Theorem~\ref{DDsuf}, the same SDD test can certify a prescribed decay rate by evaluating the converter and grid dynamics along the shifted contour \(s_\alpha=\alpha+j\omega\), with \(\alpha\leq 0\).

%In practical power systems, grid-connection requirements extend beyond stability and impose minimum decay rate constraints to avoid introducing poorly damped oscillatory behavior. Accordingly, the SDD condition can be extended to certify a prescribed decay rate, corresponding to $\alpha\leq0$ in (\ref{eq:s_alpha}). Based on Theorem \ref{DDsuf}, this is achieved by applying the SDD-based analysis to a shifted open-loop transfer function, where both the IBR and network dynamics are evaluated along a vertical contour shifted by $-\alpha$ in the complex plane. [discarded as it was more detailed]

% Here, $\alpha=-0.8 ~\textrm{sec}^{-1}$, which corresponds to all closed-loop poles to lie to the left of \(\mathrm{Re}(s)=-0.8\).
%Using the standard \(2\%\) settling-time approximation, this is equivalent to a settling-time requirement of approximately \approx 3.18~\text{s}.
% 
% Consequently, the loop gain becomes \(L(s_\alpha)=Y_I(s_\alpha)Z_G(s_\alpha)\).
% 
% then the reduced DD condition becomes
% \begin{equation}
% \begin{aligned}
% \left| 1+L_{dd}(s_\alpha) \right| > \left| L_{dq}(s_\alpha) \right|,\\
% \quad
% \left| 1+L_{qq}(s_\alpha) \right| > \left| L_{qd}(s_\alpha) \right|,
% \end{aligned}
% \label{eq:dd_conditions_alpha}
% \end{equation}
% for frequencies of interest.

Here, \(\alpha=-0.8~\textrm{sec}^{-1}\), corresponding to a 5~sec settling-time limit under the \(2\%\) settling-time criterion based on the dominant pole. Fig.~\ref{fig:DD_thev_alpha_mp2} shows \(\delta_d(s_\alpha)\) and \(\delta_q(s_\alpha)\) over the same frequency and SCR ranges. The margins deteriorate at low frequencies and lose positivity when SCR falls below \(1.9\). Consequently, condition \eqref{eq:local_sdd_stab_condition} is violated, and the prescribed decay rate can no longer be certified.

%Here, the case of $\alpha=-0.8~\textrm{$sec^{-1}$}$, corresponding to a maximum settling time of 5$s$ (2$\%$ settling band), is discussed. Fig.~\ref{fig:DD_thev_alpha_mp2} illustrates this compliance test by plotting
%\(\delta_d(s_\alpha)\) %and %\(\delta_q(s_\alpha)\) %over the same frequency range of interest and SCR range.%The minimum decay-rate certificate is valid only if both margins remain positive over the entire frequency range. The results show that this condition is satisfied for sufficiently strong grids, but progressively weakens as the SCR decreases.
% With the goal to meet the decay-rate of $\alpha=-0.8~s^{-1}$, the minimum margin becomes negative when the SCR falls below 1.9, 
%It shows that the local margin \eqref{eq:delta_iq} deteriorates at low frequencies and is no longer strictly positive throughout the frequency range when the SCR falls below $1.9$, indicating that the prescribed decay rate requirement can no longer be certified from the SDD test results. 
% As in the stability case, the loss of certificate is driven primarily by the deterioration of the \(q\)-axis margin at low frequencies. [discarded and replaced by smaller version above]

Fig.~\ref{fig:margin_ev_thev_alpha_mp2} (left subfigure) compares the SDD-based certificate with the actual decay-rate boundary from eigenvalue analysis. The certificate remains valid for \(\mathrm{SCR}>1.9\), where \(\min_{\omega}\delta(s_\alpha)>0\). As SCR decreases, the margin reaches zero before the critical SSO mode crosses \(\mathrm{Re}(s)=\alpha\), confirming conservativeness. Thus, though the SDD-based condition is violated before the actual minimum decay-rate boundary, it still guarantees the prescribed minimum decay rate within the region, as verified by the closed-loop eigenvalue analysis shown on the right subfigure of Fig.~\ref{fig:margin_ev_thev_alpha_mp2}.

\begin{figure}[t!]
    \centering
    \includegraphics[width=1\linewidth]{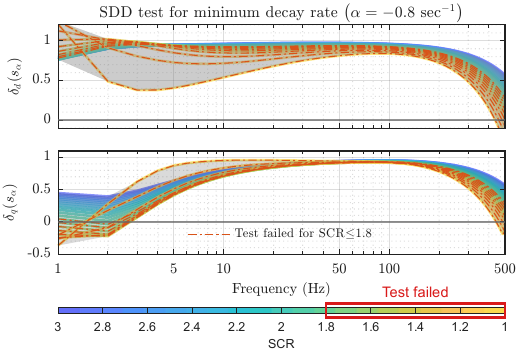}
    \caption{Local SDD margins, $\delta_d(j\omega)$ (top) and $\delta_q(j\omega)$ (bottom), for a single IBR connected to a Thevenin grid equivalent with $\alpha=-0.8~\textrm{sec}^{-1}$. All traces are color-coded by SCR (grid strength falls from blue to yellow). Dashed red traces highlight operating points that fail the SDD test (here, for $\mathrm{SCR}\leq 1.8$).}
    \label{fig:DD_thev_alpha_mp2}
\end{figure}

% the following figure is in png because when exported as eps the shading has problem with transparency/opaqueness and when exported with pdf, it crops from the right. so this is a png with 900 dpi
\begin{figure}[t!]
    \centering
    \includegraphics[width=1\linewidth]{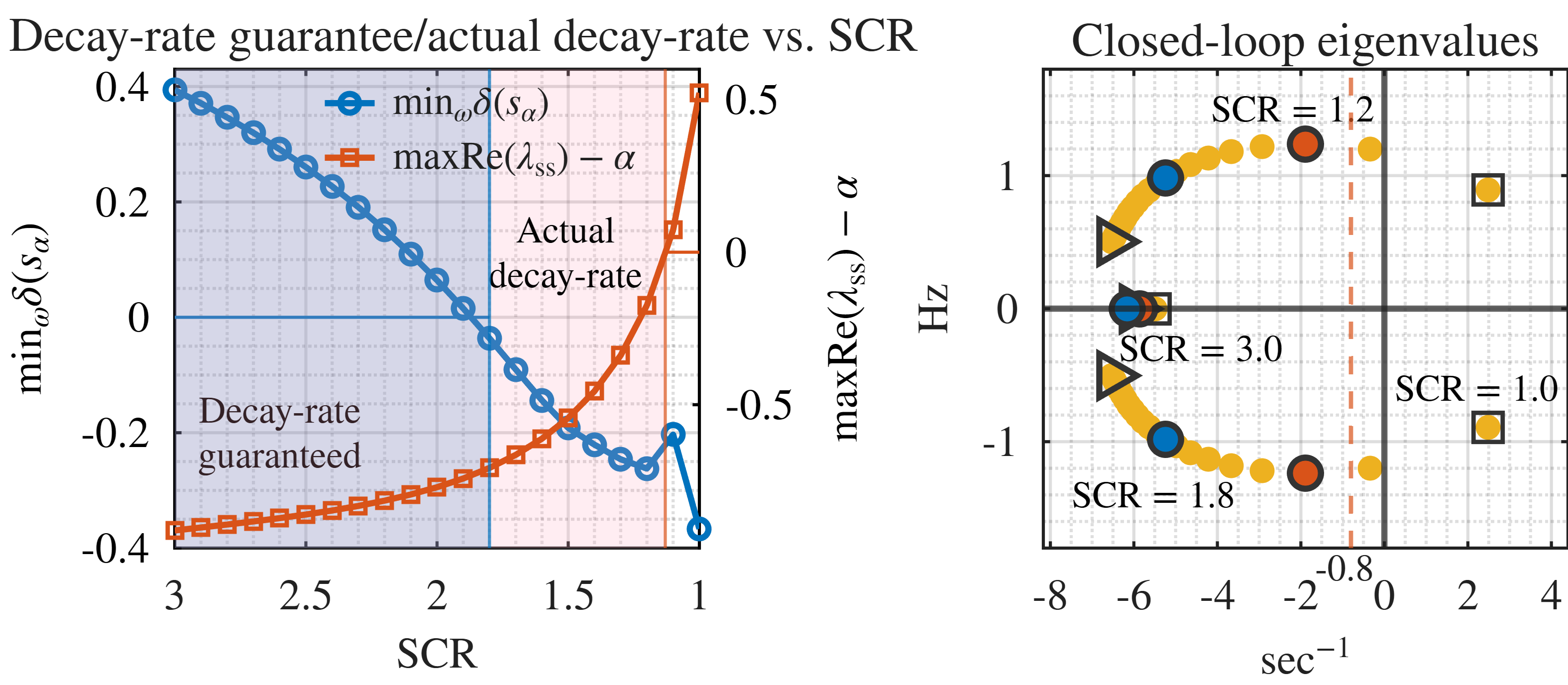}
    \caption{Left: minimum SDD margin $\min_{\omega}\delta(s_\alpha)$ (blue, left axis) and the distance of real part of least-damped sub-synchronous (SS) mode $\max\,\mathrm{Re}(\lambda_{\mathrm{ss}})$ from the prescribed decay rate boundary $\alpha=-0.8 ~\textrm{sec}^{-1}$ (red, right axis). Blue shade denotes region with guaranteed decay rate and red shaded region shows region with actual decay rate. Right: closed-loop SS modes for varying SCR from 3 to 1. Blue circles denote eigenvalues at decay rate guarantee threshold, red circles denote eigenvalues at actual decay rate threshold. SCR sweep start has triangle marker and the end has square marker.}
    \label{fig:margin_ev_thev_alpha_mp2}
\end{figure}

%Overall, the single-IBR Thevenin-equivalent example confirms that the SDD condition provides valid, albeit conservative, certificates of both closed-loop stability and minimum decay rate.

% \section{Decentralized Stability Guarantee in 39-Bus Power System}
\section{Results in Multi-IBR Power System}\label{multi-ibr case study}
\vspace{-1.5em}
While the single-IBR example in Section~\ref{sec:single-ibr case study} demonstrates how the diagonal dominance criterion can be used to certify stability and minimum decay rate, it does not fully reveal the main practical advantage of the proposed framework, namely its decentralized nature. This section therefore extends the analysis to a multi-IBR power system, where the decentralized structure of the proposed BDD-based assessment framework offers guarantees with less conservativeness during the connection compliance process of multiple IBRs.

% While the single-IBR example in Section~\ref{sec:single-ibr case study} demonstrates the use of diagonal dominance criterion, as well as its ability to provide stability and minimum decay rate guarantees, it does not fully reveal the main practical advantage of the method, namely its decentralized nature. In the single-IBR case, the BDD test reduces to its corresponding SDD form, and no inter-IBR coupling and mutual impedance need to be considered. This section therefore extends the analysis to a multi-IBR power system, where the decentralized structure of the proposed BDD-based condition offers guarantees with less conservativeness.

Fig.~\ref{fig:39BusSystem} shows the modified IEEE 39-bus system \cite{ref39bus}, \cite{11082650} considered in this study, where existing IBRs are shown in blue and the upcoming IBRs under compliance assessment are highlighted in orange. The network condition is kept same as in \cite{ref39bus}. The modified active power of each IBR is shown in Table \ref{tab:39Bus_OP}.

Compared with a GFL, a grid-forming inverter (GFM) exhibits dynamic behavior closer to that of a voltage source and generally supports to system stability~\cite{10477540,10508461}. Hence, only those scenarios with upcoming GFLs are shown here. GFMs are included in the RoG.

\begin{figure}[!t]
\centering
\includegraphics[width=0.8\linewidth]{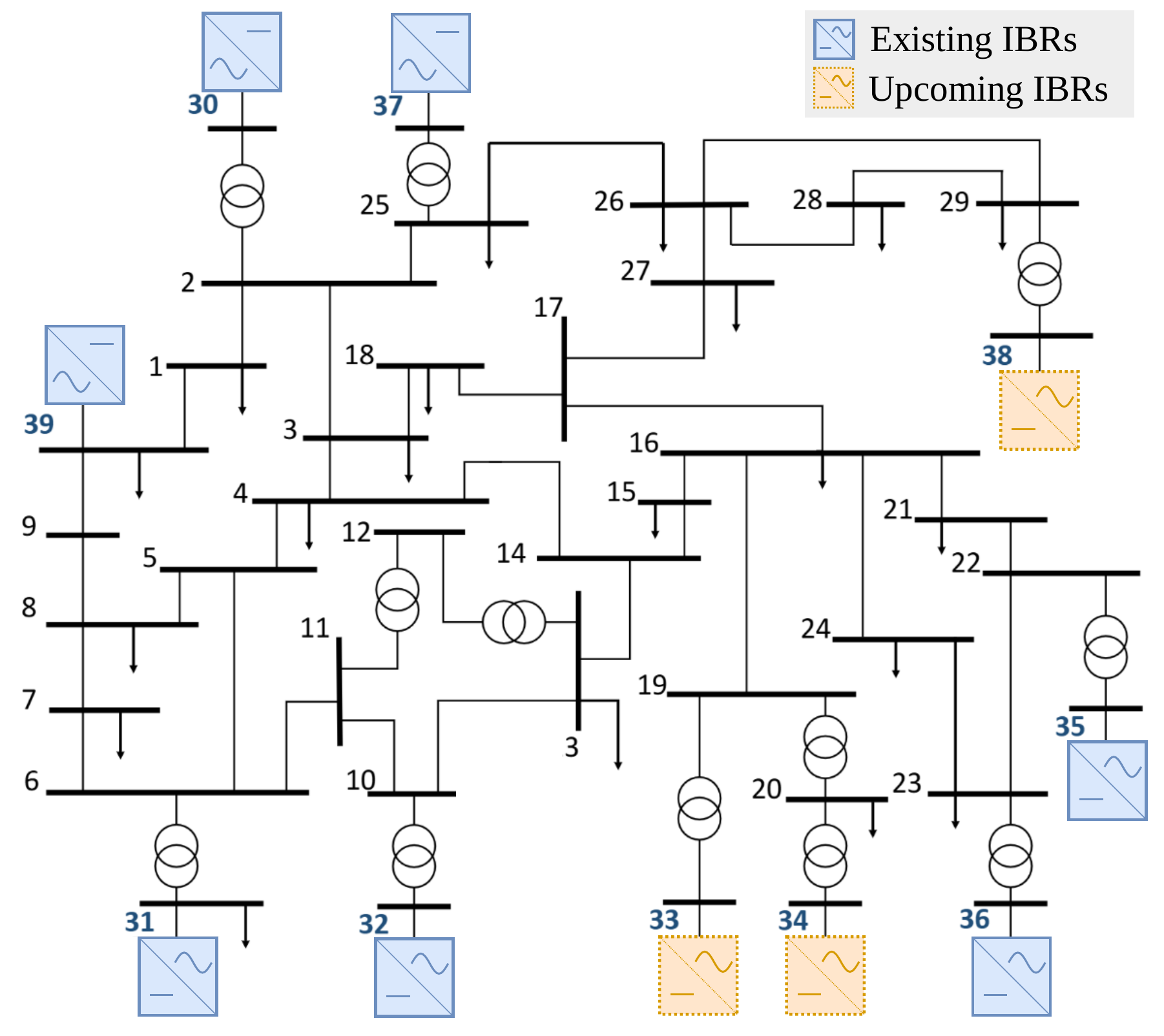}
\caption{A modified IEEE 39-bus system \cite{ref39bus,11082650} with all machines replaced by IBRs. Upcoming IBRs are considered arbitrarily at buses 33, 34 and 38.}
\label{fig:39BusSystem}
\end{figure}

%%%%%%%%%%%%%%%%%%%%%%%%% 39-bus system scenario table %%%%%%%%%%%%
\begin{table}[t!]
    \centering
    \caption{IBRs active power setpoint}
    \vspace{-1em}
    \label{tab:39Bus_OP}
    \renewcommand{\arraystretch}{1.15}
    \setlength{\tabcolsep}{3pt}
    \footnotesize
    \resizebox{\columnwidth}{!}{%
        \begin{tabular}{ccccccccccc}
        \noalign{\hrule height 1.1pt}
        Bus   & 30  & 31  & 32  & 33  & 34  & 35  & 36  & 37  & 38  & 39  \\ \hline
        $P$ (pu) & 4.50 & 12.69 & 6.50 & 7.00 & 5.08 & 5.20 & 4.48 & 4.32 & 3.64 & 8.00 \\
        \noalign{\hrule height 1.1pt}
        \end{tabular}%
    }
\end{table}

\subsection{A True Decentralized Stability Guarantee}
A natural extension of current grid-connection compliance practice using Thevenin equivalent of the grid is to assess each upcoming IBR independently using the impedance spectrum of the rest of the grid (RoG) seen at its prospective PoI and to apply a local Nyquist-type stability check. In this setting, each vendor requires only its own IBR admittance model together with the RoG impedance at its PoI, and can perform the compliance test independently.

However, in a multi-IBR setting, such local Nyquist compliance at each individual PoI does not in general guarantee that the overall closed-loop system will remain stable once all upcoming IBRs are connected simultaneously. The reason is that the local test neglects the mutual coupling among the upcoming IBRs through the rest of the grid.

In contrast to local Nyquist checks, the BDD test explicitly accounts for the row-wise coupling through the the off-diagonal terms of RoG impedance, while still preserving a decentralized structure. Each vendor requires only its own admittance model together with the corresponding row of RoG impedance information. If every upcoming IBR satisfies the proposed BDD-based condition, then the stability, or prescribed performance, of the overall closed-loop multi-IBR system is guaranteed.

As shown next, a case can arise in which all upcoming IBRs individually pass their local Nyquist tests at their respective PoIs, yet the combined closed-loop system is unstable, which is accurately flagged by the proposed BDD test.

Here, we present a case in which all upcoming IBRs (\mbox{IBR33}, \mbox{IBR34}, and \mbox{IBR38}) individually pass local Nyquist-based stability tests at their respective PoIs (Fig.~\ref{fig:Nyquist_3_IBRs}), since $\Delta(j\omega)=\det\!\big(I+L(j\omega)\big)$ does not encircle the origin (top-left, top-right, and bottom-left) and \(\sigma_{\min}(I+L(j\omega))>0\) (bottom-right) throughout the scanned frequency range (1 Hz to 500 Hz\cite{NESO2025Guidance}). However, for the same operating condition, the proposed BDD-based compliance test fails due to IBR33, as shown in Fig.~\ref{fig:BDD_for_Nyquist_EV} (left). In BDD, for full compliance, the margin obtained from \eqref{eq:Delta_i}, should be greater than 0 for all upcoming IBRs according to condition \eqref{eq:bdd_condition_power}.
% Fig.~\ref{fig:BDD_for_Nyquist_EV}(right) verifies that the overall closed-loop system is unstable (highlighted in the zoomed inset). Although failure of the BDD test does not in general imply instability, in this particular case it correctly flags the loss of a stability guarantee, whereas the local Nyquist-based tests remain non-conservative enough to miss the unstable case. This example demonstrates that relying on local Nyquist-based tests to gauge system stability could be misleading. On the other hand, the proposed BDD condition provides a rigorous compliance criterion, and as claimed in the theory, yields a genuine decentralized stability guarantee.
Fig.~\ref{fig:BDD_for_Nyquist_EV} (right) confirms that the overall closed-loop system is actually unstable as evident from the zoomed inset showing eigenvalues on the right-half plane. Due to the sufficient nature of the BDD criterion, this unstable scenario is correctly rejected by the BDD test. It should be noted that failure of the BDD test does not necessarily imply instability, though it turned out to be the case on this occasion. Notably, local Nyquist-based tests for individual IBRs fail to detect this instability in the presence of all three upcoming IBRs. This example demonstrates that relying on local Nyquist-based test could be misleading for multiple IBR connections. In contrast, the proposed BDD-based condition provides a rigorous locally verifiable compliance criterion that guarantees decentralized stability, consistent with the theoretical claims.

% Fig.~\ref{fig:BDD_for_Nyquist_EV}(right) shows that the actual system is indeed unstable (highlighted in the zoomed inset) which is consistent with the conclusion of BDD test but against the findings of Nyquist-based stability tests. This example demonstrates that relying on local Nyquist-based tests to gauge system stability could be misleading. On the other hand, the proposed BDD condition provides a rigorous compliance criterion, and as claimed in the theory, yields a genuine decentralized stability guarantee.

\begin{figure}[!t]
\centering
\includegraphics[width=1\linewidth]{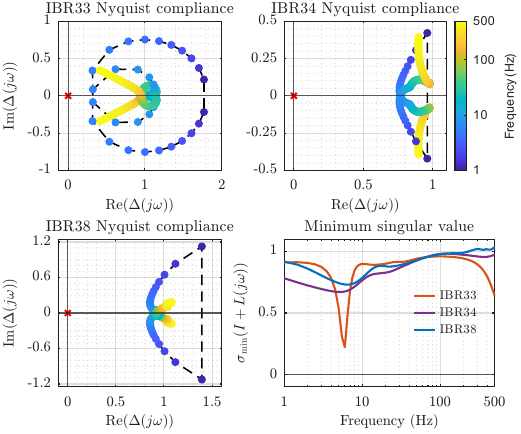}
\caption{Local generalized-Nyquist compliance assessment of the three upcoming IBRs, at their individual prospective PoIs in the 39-bus system. The top-left, top-right, and bottom-left subplots show the generalized Nyquist loci $\Delta(j\omega)$ for IBR33, IBR34, and IBR38, respectively, traced over the Nyquist contour using conjugate symmetry; the points are color-coded by frequency over the scanned range \(1\) Hz to \(500\) Hz, and the red cross marks the critical point at the origin. The bottom-right subplot shows the smallest singular value $\sigma_{\min}\!\big(I+L(j\omega)\big)$ over the scanned frequency range.}
\label{fig:Nyquist_3_IBRs}
\end{figure}

\begin{figure}[!t]
\centering
\includegraphics[width=1\linewidth]{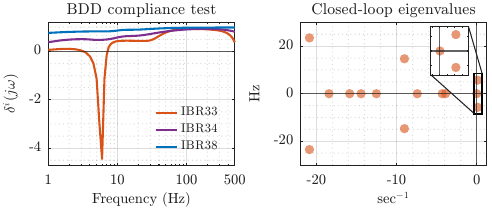}
\caption{BDD-based compliance assessment and actual closed-loop stability. Left: BDD margin \(\delta^{i}(j\omega)\) for each upcoming IBR. Right: overall closed-loop system eigenvalues, showing that a \(5.7\) Hz mode lies in the right-half plane.}
\label{fig:BDD_for_Nyquist_EV}
\end{figure}

It is worth noting that in theory a sequential connection procedure can also be adopted, in which the system operator repeats the network scan considering previous IBR connections and the upcoming IBR reassesses compliance at its PoI. Although feasible in principle, this requires repeated scans (after every connection) and introduces queue dependence, since later IBRs may face progressively stricter compliance requirements as earlier connections consume part of the available stability margin. The other possibility is to conduct a centralized assessment in which the system operator receives the models of all upcoming IBRs and performs a full multi-IBR stability check. This can determine the actual combined-system stability, but it is no longer a decentralized approach, since the compliance assessment is shifted to the system operator and users (vendors/plant developers) cannot test the criteria locally at their end.

\subsection{A Sufficient Condition and Associated Conservativeness}
In this subsection, the focus is on minimum decay rate certification rather than mere stability. Accordingly, the BDD margins are evaluated on the shifted contour \(s_\alpha\) in (\ref{eq:s_alpha}), and the vertical axes in Fig.~\ref{fig:BDD_Sufficiency_EV} are therefore expressed in terms of \(\delta^i(s_\alpha)\), not \(\delta^i(j\omega)\). Here too, $\alpha=-0.8 ~\textrm{sec}^{-1}$.

Fig.~\ref{fig:BDD_Sufficiency_EV} illustrates the sufficient nature of the proposed condition. In the full-compliance case, all three upcoming IBRs maintain positive BDD margins over the scanned frequency range, and the corresponding closed-loop eigenvalues lie safely to the left of the prescribed decay rate boundary. In the moderate-noncompliance case, the active power control bandwidth of IBR33 is increased to 30 Hz from its base case value of 10 Hz. As a result, the margin of IBR33 falls below, so the certificate is lost, yet the eigenvalues still remain to the left of \(\mathrm{Re}(s)=\alpha\). This shows that failure of the BDD test does not necessarily imply violation of the decay rate requirement. In the severe-noncompliance case, where active power control bandwidth is further increased to 50 Hz, the violation of the BDD margin becomes much more severe. Consequently, the closed-loop SS mode crosses the prescribed decay rate boundary. Here, the IBR dynamic models and the closed-loop bandwidth-based control design are adopted from \cite{11082650,iravani}.

\begin{figure}[!t]
\centering
\includegraphics[width=1\linewidth]{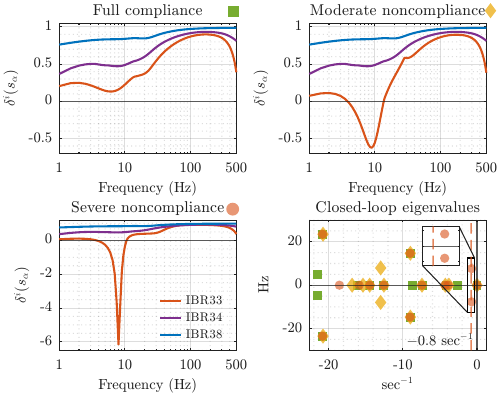}
\caption{The first three subplots show BDD margins of the upcoming IBRs for three cases: full compliance (top-left), moderate noncompliance (top-right), and severe noncompliance (bottom-left). IBR33 is made noncompliant by increasing IBR33 active power control bandwidth. The bottom-right subplot shows the corresponding closed-loop eigenvalues of the overall closed-loop system, together with the prescribed decay rate boundary $\alpha=-0.8~\textrm{sec}^{-1}$.}
\label{fig:BDD_Sufficiency_EV}
\end{figure}

Having established that the proposed condition is sufficient and therefore inherently conservative, Fig.~\ref{fig:conservativeness} next illustrates how this conservativeness depends on IBR location and how it compares with the corresponding SDD test. In Fig.~\ref{fig:conservativeness}, the left subplot shows that conservativeness depends strongly on the relative locations of the upcoming IBRs. When the second upcoming IBR is placed at bus 34, which is electrically close to bus 33, the certificate becomes restrictive. Here, conservativeness is assessed by sweeping the PLL bandwidth of the second GFL IBR while keeping the first upcoming IBR fixed at bus 33, and observing the range of PLL bandwidths for which the BDD margin remains positive. In practice, the PLL bandwidth is a key tuning parameter that directly affects synchronization dynamics and is often a trade-off between speed of response and well-damped behavior. In this sense, the admissible PLL-bandwidth range serves as a measure of design flexibility: the wider the range over which the certificate is satisfied, the less conservative the setup. For the electrically close pair formed by buses 33 and 34, the allowable PLL-bandwidth range is restricted to 61 Hz, indicating that strong coupling between the two locations rapidly consumes the available margin. In contrast, when the second upcoming IBR is placed at bus 38, which is electrically more distant, the margin remains comfortably positive over the same sweep. This is because close-by locations exhibit stronger mutual coupling, that is, larger off-diagonal transfer impedance, which makes \eqref{eq:bdd_condition_power} harder to satisfy.

% The right subplot of Fig.~\ref{fig:conservativeness} compares the proposed BDD-based condition against the corresponding SDD test. As explained by the theory in Section~\ref{subsec:BDD_theory}, the BDD margin remains positive over a wider parameter range than the SDD margin, confirming that BDD is the less conservative than SDD.
The right subplot of Fig.~\ref{fig:conservativeness} compares the outcomes of the BDD test against the SDD test. The BDD margin remains positive over a wider parameter range than the SDD margin, indicating that BDD is less conservative than SDD. This is consistent with Proposition~\ref{blockbetter}, as proved in Appendix~\ref{proofnonconservative}.
\begin{figure}[!t]
\centering
\includegraphics[width=1\linewidth]{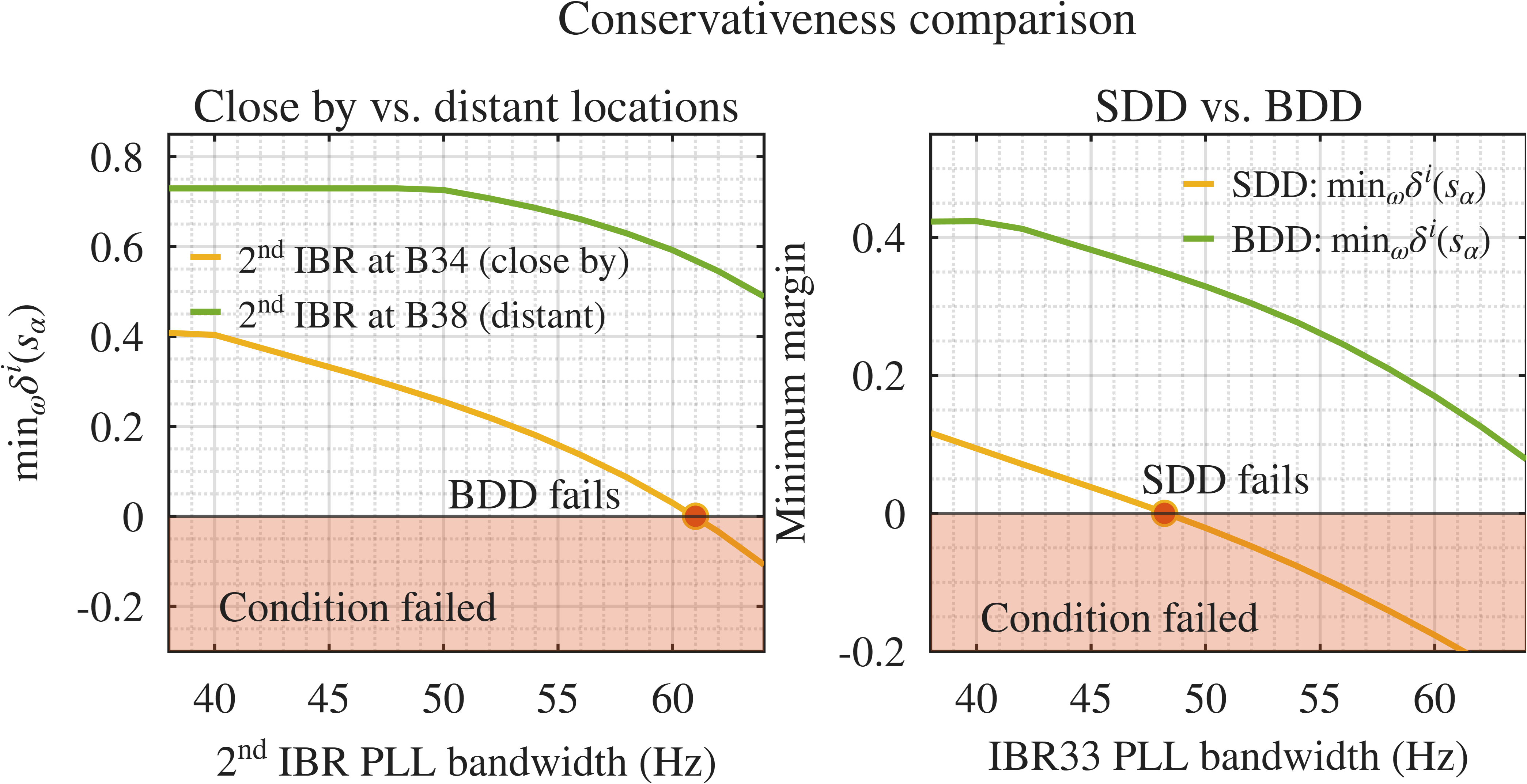}
\caption{Left: effect of location on BDD conservativeness. Right: comparison of SDD and BDD conservativeness}
\label{fig:conservativeness}
\end{figure}

%%%%%%%%%%%%%%%%%%%%%%%%%%%%%%%%%%%%%%

\subsection{Monte Carlo–Based Validation Under Diverse Operating Conditions}
To further demonstrate the generality and applicability of the proposed method for practical grid-connection compliance, the diagonal-dominance-based stability conditions are evaluated over a large set of operating points in the modified IEEE 39-bus system. Starting from the base case, the active power setpoints of IBRs and the loads are randomly perturbed according to Gaussian distributions centered at their nominal values, with standard deviations set to 20\% of the corresponding mean values. Only steady-state feasible cases are retained for subsequent analysis, that is, cases for which the power flow converges, all generator reactive power outputs remain within their limits, and all bus-voltage magnitudes stay within the prescribed admissible range of $\pm 7\%$ around nominal values.

Table~\ref{tab:dd_stats} summarizes the results for both stability and minimum decay rate certification scenarios using SDD- and BDD-based conditions. For the stability certification scenarios, all unstable cases are successfully identified and rejected by both the SDD- and BDD-based conditions, which is consistent with their nature as sufficient stability criteria. Among the total 2056 stable scenarios identified by eigenvalue analysis, the SDD-based condition certifies 378 cases, corresponding to 18.43\% of all stable scenarios. In contrast, the proposed BDD-based condition certifies 1646 stable scenarios, covering 80.06\% of the actual stable set, which is nearly five times higher than that achieved by the SDD-based condition.

For the minimum decay rate certification scenarios, $(\alpha<-0.8~\textrm{sec}^{-1})$ corresponding to a settling time of approximately 5 $s$, such that all closed-loop eigenvalues must lie to the left of $-0.8$ on the real axis. Similar to the stability certification results, both SDD- and BDD-based conditions successfully reject all poorly damped scenarios. Among the scenarios that satisfy the minimum decay rate requirement according to eigenvalue analysis, the SDD-based condition certifies 293 cases, corresponding to 14.84\%, while the BDD-based condition certifies 1547 cases, covering 78.37\% of the total well-damped set.

These results demonstrate that both SDD- and BDD-based conditions can provide stability and minimum decay rate guarantees in a fully decentralized manner. Furthermore, they highlight a significant reduction in conservativeness when using the BDD-based conditions over a wide range of operating points. This is consistent with the trend observed in Fig.~\ref{fig:conservativeness}, and with Proposition~\ref{blockbetter}, proved in Appendix~\ref{proofnonconservative}, both of which indicate that BDD is less conservative than SDD in decentralized applications.

\begin{table}[t]
\caption{Statistical performance comparison of SDD and BDD across cases under diverse operating points.}
\label{tab:dd_stats}
\centering
\setlength{\tabcolsep}{6pt}
\renewcommand{\arraystretch}{1.15}
\begin{tabular}{c|c|c|c}
\noalign{\hrule height 1.1pt}
\textbf{Case category} & \textbf{Total} & \textbf{SDD passed} & \textbf{BDD passed} \\
\hline
Stable ($\alpha < 0$)    
& 2056 
& \begin{tabular}[c]{@{}c|c@{}}379 & 18.43\%\end{tabular}
& \begin{tabular}[c]{@{}c|c@{}}\textbf{1646} & \textbf{80.06\%}\end{tabular} \\
\hline
Damped ($\alpha < -0.8$) 
& 1974 
& \begin{tabular}[c]{@{}c|c@{}}293 & 14.84\%\end{tabular}
& \begin{tabular}[c]{@{}c|c@{}}\textbf{1547} & \textbf{78.37\%}\end{tabular} \\
\noalign{\hrule height 1.1pt}
\end{tabular}
\vspace{0.5ex}

\footnotesize{Note: ``Total'' cases are determined by eigenvalue analysis; each entry under ``SDD passed'' and ``BDD passed'' reports the passed case number and corresponding percentage computed as (number of cases passed the condition / number of total cases)$\times 100\%$.}
\end{table}

% \begin{figure}[!t]
% \centering
% \includegraphics[width=1\linewidth]{New_Figures/fig_montecarlo.pdf}
% \caption{Statistical performance comparison of DD (left) and BDD (right) across 2056 operating points. All operating points are load-flow feasible and stable determined by eigenvalue analysis.}
% \label{fig:montecarlo}
% \end{figure}

% \begin{figure}[!t]
% \centering
% \includegraphics[width=1\linewidth]{New_Figures/fig_venn_montecarlo.pdf}
% \caption{Statistical performance comparison of SDD (left) and BDD (right) across 2056 operating points. All operating points are load-flow feasible and stable determined by eigenvalue analysis.}
% \label{fig:venn_montecarlo}
% \end{figure}

\section{Conclusion}
This paper presented a block diagonal dominance (BDD) framework for decentralized small-signal stability assessment of IBR-dominated power systems. The proposed criterion provides a locally verifiable IBR connection compliance condition that is theoretically grounded and practically implementable without any restrictive assumptions on network or IBR models. The results highlight that BDD enables decentralized stability certificate in cases where conventional local checks (e.g., Nyquist-based) may fail to guarantee overall system stability for multiple IBR connections. The BDD-based condition is shown to be significantly less conservative than strict diagonal dominance (SDD) with respect to admissible IBR dynamics (characterized by control bandwidths) and proximity of IBR connection points. Extensive scenario analysis on the IEEE 39-bus system shows that the BDD-based condition is satisfied for approximately 80\% of actually stable cases, compared to only 19\% for SDD. With an added minimum decay rate constraint (ensuring $\le$5 sec settling time), these reduce slightly to 78\% and 15\%, respectively, demonstrating robustness under stricter performance requirements.
Overall, the proposed BDD-based framework offers a scalable and less conservative basis for decentralized stability and performance certification of connecting IBRs.

% \section*{Acknowledgment}
% The authors would thank Prof Mark O'Malley and colleagues in \href{https://www.imperial.ac.uk/uk-electric-power-innovation/}{EPICS global centre}, and \href{https://www.neso.energy/}{NESO} (grid operator in Great Britain) for their feedback.

% Can use something like this to put references on a page
% by themselves when using endfloat and the captionsoff option.
\ifCLASSOPTIONcaptionsoff
  \newpage
\fi

\bibliographystyle{IEEEtran}
% argument is your BibTeX string definitions and bibliography database(s)
\bibliography{DD_paper_refs}

\section*{Appendices}
\refstepcounter{appx}\label{blockthmproof}
\noindent\textbf{Appendix \theappx: Proof of Theorem \ref{blockDD}.}
By contradiction, assume $M$ is singular. There exists non-zero $v \in \mathbb{C}^n$, such that
$M v=0$. Partition $v$ according to the blocks in $M$, i.e. $v'=[v_1',v_2', \ldots, v_Q']'$.
We let $v_{-i}$ denote $[v_1',v_2',\ldots,  v_{i-1}',v_{i+1}', \ldots, v_Q' ]$.
Then $Mv=0$ implies $0 = \sum_{j=1}^Q M_{i,j} v_j = M_{i,i} v_i + M_{i,-i} v_{-i}$ for all $i
\in \{1, \ldots, Q\}$. Pick $\bar{\imath}$ so as to maximize $|v_i|_{\infty}$. Clearly $|v_{\bar{\imath}}|_{\infty} >0$, (as $v \neq 0$).
Since $M_{i,i}$ is invertible for all $i$ we get: $v_{\bar{\imath}} = - M_{\bar{\imath},\bar{\imath}}^{-1} M_{\bar{\imath},-\bar{\imath}} v_{-\bar{\imath}}$.
Moreover $|v_{-\bar{\imath}}|_{\infty} = \max_{j \neq \bar{\imath}} | v_j|_{\infty} \leq | v_{\bar{\imath}} |_{\infty}$.
The following series of inequalities applies:
\[ | v_{\bar{\imath}} |_{\infty} = | M_{\bar{\imath},\bar{\imath}}^{-1} M_{\bar{\imath},-\bar{\imath}} \, v_{-\bar{\imath}}|_{\infty} \leq \|  M_{\bar{\imath},\bar{\imath}}^{-1}  \, M_{\bar{\imath},-\bar{\imath}} \|_{\infty} \, | v_{-\bar{\imath}}|_{\infty}  \]
\[\qquad \qquad \qquad 
\leq \|  M_{\bar{\imath},\bar{\imath}}^{-1} \, M_{\bar{\imath},-\bar{\imath}} \|_{\infty}   | v_{\bar{\imath}}|_{\infty}.   
\]
%\[ \qquad \qquad \leq  \|  M_{\bar{\imath},\bar{\imath}}^{-1} \|_{\infty} \, \| M_{\bar{\imath},-\bar{\imath}} \|_{\infty}   | v_{\bar{\imath}}|_{\infty}.   \]
Dividing by $|v_{\bar{\imath}}|_{\infty}$ both sides of the previous inequality yields 
\[ 1 \leq \| M_{\bar{\imath},\bar{\imath}}^{-1} \, M_{\bar{\imath},-\bar{\imath}} \|_{\infty}, \]
which violates condition (\ref{relaxedDD}). This completes the proof.
\hfill $\Box$ 

\medskip

\refstepcounter{appx}\label{proofnonconservative}
\noindent\textbf{Appendix \theappx: Proof of Proposition \ref{blockbetter}.}
Let $A$ and $R$ be defined as follows:
\[
A= \left [ \begin{array}{cc}  m_{11} & m_{12} \\ m_{21} & m_{22} \end{array} \right ],
\quad
R = \left [ \begin{array}{cccc} m_{13} & m_{14} & \ldots& m_{1n} \\ m_{23} & m_{24} & \ldots & m_{2n} \end{array} \right ].
\]
Denote $R_1 = \sum_{j=3}^n |m_{1j}|$ and $R_2 = \sum_{j=3}^n |m_{2j}|$ as the sums of the absolute values of the external entries. The strict diagonal dominance (SDD) conditions imply:
\begin{align}
    |m_{11}| &> |m_{12}| + R_1 \implies R_1 < |m_{11}| - |m_{12}| \label{eq:sdd1} \\
    |m_{22}| &> |m_{21}| + R_2 \implies R_2 < |m_{22}| - |m_{21}| \label{eq:sdd2}
\end{align}
The inverse of the $2 \times 2$ block $A$ is given by:
\begin{equation*}
    A^{-1} = \frac{1}{\det(A)} \begin{pmatrix} m_{22} & -m_{12} \\ -m_{21} & m_{11} \end{pmatrix}
\end{equation*}
By the reverse triangle inequality, the magnitude of the determinant satisfies:
\begin{equation*}
    |\det(A)| = |m_{11}m_{22} - m_{12}m_{21}| \geq |m_{11}m_{22}| - |m_{12}m_{21}|
\end{equation*}
Note that $|m_{11}m_{22}| > |m_{12}m_{21}|$ is guaranteed by the SDD conditions. The $L_1$ norm of the first row of the product $A^{-1}R$ satisfies:
\begin{align*}
    \sum_{j=3}^n |(A^{-1}R)_{1j}|
    &= \sum_{j=3}^n \left| \frac{m_{22}m_{1j} - m_{12}m_{2j}}{\det(A)} \right| \\
    &\leq \frac{|m_{22}| \sum_{j=3}^n |m_{1j}| + |m_{12}| \sum_{j=3}^n |m_{2j}|}{|\det(A)|} \\
    &= \frac{|m_{22}|R_1 + |m_{12}|R_2}{|\det(A)|}.
\end{align*}
Substituting the bounds from \eqref{eq:sdd1} and \eqref{eq:sdd2}:
\begin{align*}
    \sum_{j=3}^n |(A^{-1}R)_{1j}|
    &< \frac{|m_{22}|(|m_{11}| - |m_{12}|) + |m_{12}|(|m_{22}| - |m_{21}|)}{|m_{11}m_{22}| - |m_{12}m_{21}|} \\
    &= \frac{|m_{11}m_{22}| - |m_{22}m_{12}| + |m_{12}m_{22}| - |m_{12}m_{21}|}{|m_{11}m_{22}| - |m_{12}m_{21}|} \\
    &= \frac{|m_{11}m_{22}| - |m_{12}m_{21}|}{|m_{11}m_{22}| - |m_{12}m_{21}|} = 1.
\end{align*}
By symmetry, the same result holds for the second row. Since both row sums are strictly less than $1$, we conclude that $\|A^{-1}R\|_{\infty} < 1$.
\hfill $\Box$

\end{document}